\documentclass{article}



\usepackage[preprint,nonatbib]{neurips_2024}



\usepackage[utf8]{inputenc} 
\usepackage[T1]{fontenc}    
\usepackage{hyperref}       
\usepackage{url}            
\usepackage{booktabs}       
\usepackage{amsfonts}       
\usepackage{nicefrac}       
\usepackage{microtype}      
\usepackage{xcolor}         

\usepackage{graphicx}
\usepackage{tabularx}

\usepackage{multicol,multirow}

\usepackage{amsmath}
\usepackage{amssymb}
\usepackage{mathtools}
\usepackage{amsthm}
\theoremstyle{plain}
\newtheorem{theorem}{Theorem}[section]

\theoremstyle{definition}

\theoremstyle{remark}

\usepackage[title]{appendix}


\usepackage{amsmath,amsfonts,bm}









\def\eqref#1{equation~\ref{#1}}









\def\1{\bm{1}}










\def\mE{{\bm{E}}}
\def\mF{{\bm{F}}}

\def\mP{{\bm{P}}}

\def\mY{{\bm{Y}}}

\DeclareMathAlphabet{\mathsfit}{\encodingdefault}{\sfdefault}{m}{sl}
\SetMathAlphabet{\mathsfit}{bold}{\encodingdefault}{\sfdefault}{bx}{n}


\def\gM{{\mathcal{M}}}

\def\gX{{\mathcal{X}}}










\newcommand{\R}{\mathbb{R}}



\DeclareMathOperator*{\argmin}{arg\,min}

\title{Personalized Adapter for Large Meteorology Model on Devices: Towards Weather Foundation Models}

%

\author{%
  Shengchao Chen \\
  AAII, School of CS, FEIT\\
  University of Technology Sydney\\
  \texttt{shengchao.chen.uts@gmail.com} \\
  \And
  Guodong Long \\
  AAII, School of CS, FEIT \\
  University of Technology Sydney \\
  \texttt{Guodong.Long@uts.edu.au} \\
  \AND
  Jing Jiang \\
  AAII, School of CS, FEIT \\
  University of Technology Sydney \\
  \texttt{Jing.Jiang@uts.edu.au} \\
  \And
  Chengqi Zhang \\
  AAII, School of CS, FEIT \\
  University of Technology Sydney \\
  \texttt{Chengqi.Zhang@uts.edu.au} \\
}

\begin{document}

\maketitle

\begin{abstract}
This paper demonstrates that pre-trained language models (PLMs) are strong foundation models for on-device meteorological variables modeling. We present \textsc{LM-Weather}, a generic approach to taming PLMs, that have learned massive sequential knowledge from the universe of natural language databases, to acquire an immediate capability to obtain highly customized models for heterogeneous meteorological data on devices while keeping high efficiency. Concretely, we introduce a lightweight personalized adapter into PLMs and endows it with weather pattern awareness. During communication between clients and the server, low-rank-based transmission is performed to effectively fuse the global knowledge among devices while maintaining high communication efficiency and ensuring privacy. Experiments on real-wold dataset show that \textsc{LM-Weather} outperforms the state-of-the-art results by a large margin across various tasks (\textit{e.g.}, forecasting and imputation at different scales). We provide extensive and in-depth analyses experiments, which verify that \textsc{LM-Weather} can (1) indeed leverage sequential knowledge from natural language to accurately handle meteorological sequence, (2) allows each devices obtain highly customized models under significant heterogeneity, and (3) generalize under data-limited and out-of-distribution (OOD) scenarios.
\end{abstract}

\section{Introduction}
Accurately modeling weather variation pattern from large amount of meteorological variables sequences is increasingly vital for providing efficient weather analysis support for disaster warning. Recently, the promise of learning to understand weather pattern from data via deep learning (DL) has led to an ongoing paradigm shift apart from the long-established physics-based methods~\cite{nguyen2023climax,keisler2022forecasting}. 

Mining potential patterns from meteorological sequences that collected from different regions, including forecasting and imputation, is one of the most important problems in meteorology. Significant progress has been made by several latest time series approaches~\cite{nguyen2023climax,zhou2021informer,bi2023accurate}. These approaches formulate meteorological variable modeling as an end-to-end spatio-temporal learning problem. This overlooks the reality that ground weather devices distributed globally gather vast amounts of data quickly. The sheer volume of data, coupled with limited network capacity, necessitates local processing on the devices, making centralised learning challenging~\cite{chen2023foundation}. On-device intelligence enables edge devices to compute independently, offering a primary solution to the problem.

\begin{figure*}[tbh]
    \centering
    \includegraphics[width=1\textwidth]{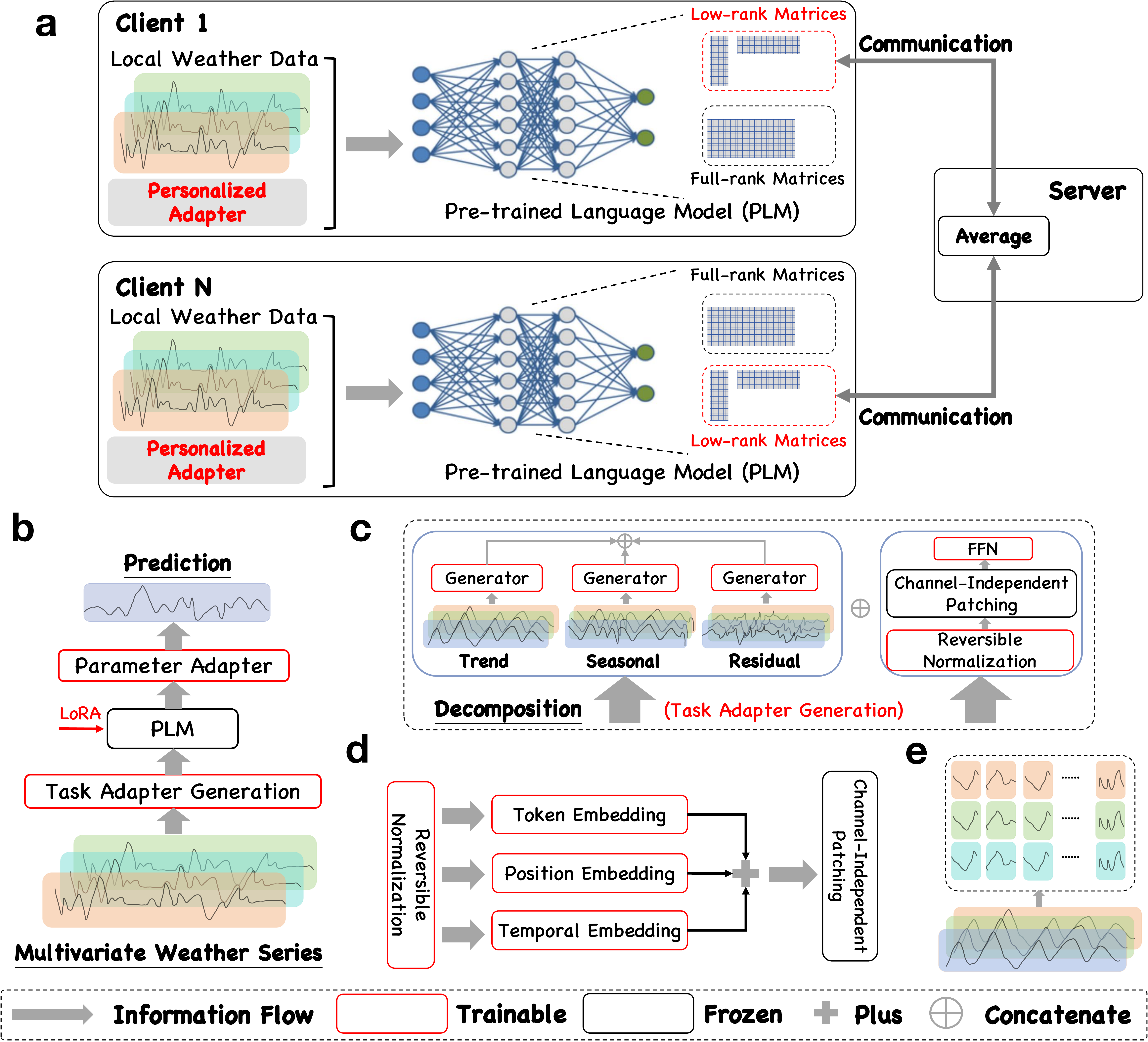}
    \vspace{-8pt}
    \caption{\textit{Overview}. \textbf{(a)} Schematic of \textsc{LM-Weather}, each client  using \textit{personalized adapter} to endow the PLM for local weather awareness, only low-rank matrices are transmitted to enhance efficiency during communication; \textbf{(b)} Brief structure of PLM on each client, detailed architecture can be found in Appendix; \textbf{(c)} Task Adapter, the multivariate weather series input splits into two paths. The first path isolates the trend, seasonal, and residual elements, which each go through independent generator to produce specific adapters; \textbf{(d)} Architecture of the generator for each decomposed element; \textbf{(e)} Schematic diagram of Channel-Independent Patching~\cite{nie2022time}.}
    \label{fig:framework-one}
    \vspace{-12pt}
\end{figure*}

Federated Learning (FL)~\cite{mcmahan2017communication} is a promising on-device intelligence implementation that collaboratively train a uniform model across devices without exchanging raw data. However, the model often underperform due to data heterogeneity among clients. Personalized FL (PFL) provides new insights for on-device intelligence that allows each device obtains customized models for providing personalized insights~\cite{li2021fedbn,tan2022towards}.  Albeit PFL methods showing revolutionized capability in this field, we argue that the current advancements are not necessarily at their best in on-device meteorological variable modeling as three major obstacles remain and hinder further progress:
\vspace{-4pt}
\begin{itemize}
    \item[(i)] \textbf{Challenge of heterogeneity.} Weather data's heterogeneity, unlike that of images or text, arises mainly from the unique characteristics of data collected by weather devices in various regions, such as tropical or arid areas. Furthermore, sensor malfunctions or extreme events can lead to collection disruptions or inconsistent missing data, which significantly increase the differences in data distribution across devices.
    \vspace{-4pt}
    \item[(ii)] \textbf{Underperformed shallow network structures.} The vast and varied data gathered by weather devices challenge simpler neural network models to generalize effectively. Furthermore, the frequent updates of weather data (hourly or by the minute) require neural models on devices to train and infer more often. This demand is hard to meet with deeper models that, while more performant, are also more resource-intensive.
    \vspace{-4pt}
    \item[(iii)] \textbf{Resource-constrained weather devices.} From a computation perspective, weather devices cannot afford of training complex neural models from scratch, especially for foundation models~\cite{bi2023accurate}. From a communication perspective, transmitting complete model during the aggregation phase in FL/PFL significantly increases communication overhead, which is impractical for real-time weather modeling.
\end{itemize}
\vspace{-9pt}
Therefore, a compact foundational model (FM) is crucial for personalized on-device weather modeling. Yet, there's a gap in FMs for observational data. Models trained on large-scale simulation data struggle in practical applications because of notable differences in data formats and parameter scales~\cite{nguyen2023climax,bi2023accurate}.

Inspired by the impressive progress of large language models (LLMs) in natural language processing, recent literature in time series analysis research has also demonstrated that pre-trained LMs provide excellent performance over dedicated models for time series analysis with tuning~\cite{cao2023tempo} or reprogramming~\cite{jin2023time}. This comprehensive and thorough sequence knowledge from language models can be effortlessly transferred across domains without large-scale parameter tuning. Thus, an exciting research question naturally arises:

\quad \textit{Since PLMs are powerful sequence modelers, can we leverage PLMs as foundation models to achieve personalized on-device meteorological variable modeling?}
\vspace{-2pt}

In this paper, we show that pre-trained language models (PLMs) can as outstanding foundation models that tuned on each device with low cost can achieve personalized on-device weather pattern modeling. We propose \textsc{LM-Weather}, a generic approach to taming PLMs to understand heterogeneity on-device weather data. As shown in \textbf{Fig.1F}, we conduct a local tuning on an uniform PLM (e.g., GPT2), where lightweight \textit{personalized adapters} are implanted to endow PLMs with weather pattern awareness by decomposing weather sequence to implicit knowledge (e.g., seasonal, trend). During communication between client and server, fewer parameters are shared globally while locally retained adapters are enforced to resist heterogeneity and facilitate privacy-assured fusion of global knowledge.

We highlight our contributions and findings as follows:
\begin{itemize}
\item We introduce \textsc{LM-Weather}, a generic approach that transforms Pre-trained Language Models as the foundation model to customized on-device meteorological variable modeling via \textit{personalized adapter}. \textsc{LM-Weather} yields preferable meteorological variable sequences modeling, while being parameter-, communication-, and data-efficient.
\item We collect and compile four real-world versatile datasets for on-device meteorological variable modeling across regions. As opposed to simulated datasets such as ERA5~\cite{hersbach2020era5}, our datasets are all real-time  observations. These datasets based on real-world practice and challenging, provide a pioneer in the field of on-device meteorological variable modeling.
\item Experiments show that \textsc{LM-Weather} advances the state-of-the-art methods by a large margin across various setting while keeping \textbf{3.7\%} of parameters communication. \textsc{LM-Weather} also demonstrates superior communication efficiency in the context of meteorological variable modeling, beating FL baselines tailored to reduce communication overhead.
\item In particular, we find that \textsc{LM-Weather} can accurately handle structurally non-deterministic sequences (e.g., differences in time or variable dimensions across devices) thanks to the learned sequences knowledge from pre-trained LMs. We also find that \textsc{LM-Weather} can indeed be spatio-temporal sequences sensitive, thereby better modeling the weather pattern specificity of those high distribution similarity.
\item We find that \textsc{LM-Weather} can work well in data-limited environments across various few-shot settings. We further evaluate zero-shot generalizability of \textsc{LM-Weather} in modeling complex weather patterns of unseen data, including different group of datasets and other devices, and observe superb performance.
\end{itemize}
\vspace{-4pt}
We highlight that the goal of this study is not to compete but instead to complement current on-device meteorological variables modeling framework. Today's climate foundation models are typically trained from scratch, utilizing exceptionally large datasets (nearly 100TB~\cite{bi2023accurate, rasp2020weatherbench}) and incurring substantial computational costs~\cite{nguyen2023climax}. We hope that \textsc{LM-Weather} offers a cost-effective alternative for modeling meteorological variables on-device, thereby enabling accurate regional weather trend analysis. In addition, the dataset we complied can be the important resource to provide exploring chances for this field, facilitating future research.


\section{Preliminaries}
\subsection{On-device Meteorological Variable (Sequence) Modeling}
The on-device meteorological variable (sequence) modeling challenge involves predicting future sequences from past observations for forecasting or identifying missing values for imputation on each device. While traditional physics-based methods approach this as a complex problem of solving multilevel atmospheric equations~\cite{bauer2015quiet}, recent deep learning (DL) techniques have shown significant potential in uncovering patterns for better weather prediction~\cite{bi2023accurate,keisler2022forecasting}.
\paragraph{Problem Formulation.} On-device meteorological variable modeling can be formulated as an end-to-end sequence-to-sequence learning problem for each device without exchange raw data. Formally, a parameterized local model for $i$-th device $\gM_{\theta}^i$ is tasked with predicting the weather sequence,
\begin{equation}
    \gM_{\theta}^i : \gX_i \rightarrow \hat{\gX_i}
\end{equation}
where the $\gX_i \in \R^{L \times C}$ and $\hat{\gX_i} \in \R^{L' \times C'}$ denote the input and output sequences on $i$-device, $L$ and $L'$ is the input length and output length, $C$ and $C'$ is the number of input and output variable. Note that the $L' \rightarrow L$ when performing imputation. The local learning objective on each device is to find the model parameter $\theta$ that minimize the distance between $\hat{\gX_i}$ and $\gX_i$ given sufficient weather sequence data. The overall optimization objective is based on FedAvg,
\begin{equation}
    F(\theta) \text{:}= \argmin \sum_{i=1}^{N} \frac{n_k}{n} F_i(\theta_i | \lbrace D_i \rbrace),
\end{equation}
where $n_i$ and $n$ is the number of samples held by the $k$-th device and all clients, respectively, $F(\theta|\lbrace D \rbrace)$ denotes the local objective function, $\lbrace D \rbrace$ is the local data. 
\subsection{Language Models in Time Series}
Language models (LMs) trained on large-scale sequence data have shown extraordinary advances and led to a significant paradigm shift in NLP, boosting machines in understanding human languages (BERT/MLM-style) and synthesizing human-like text (GPT/CLM-style~\cite{radford2019language}). Analogies between time series and human languages have long been noted~\cite{xue2023promptcast}. Recent advancements in time series analysis have demonstrated the effectiveness of PLMs in modeling time series~\cite{zhou2023one,jin2023time}. Although some of those have shown that PLMs can beat time series-specific models in updating a minor fraction of parameters~\cite{chang2023llm4ts}. As such, it is exciting to expect cutting-edge techniques of language modeling can tackle weather variables sequence-related problems rather than considering train climate foundation models~\cite{bi2023accurate,nguyen2023climax} from scratch that are heavy and expensive, and are trained from simulated data.

\section{Taming PLMs for On-device Meteorological Variable (Sequence) Modeling}
\paragraph{Overview.} We proposed a generic framework named \textsc{LM-Weather} that encouraging PLMs to yield accurate prediction while keeping high efficiency for each device. The architecture is illustrated in \textbf{Fig.~\ref{fig:framework-one}}. To endow PLMs with weather pattern awareness, we introduce a lightweight \textit{personalized adapter} into PLMs (e.g., GPT2~\cite{radford2019language}) such that the emergent ability of sequence modeling that transferred from text into weather is activated. To achieve cross-domain knowledge transfer with minimal effort while maintaining the sequence modeling capabilities of PLMs as intact as possible, We introduce lightweight operations in it enables both clients and servers to achieve outstanding performance while ensuring optimal computational and communication efficiency.
\vspace{-4pt}
\subsection{Local Training}
To better tame PLMs that are tasked with language modeling to achieve personalized weather modeling for heterogeneous devices, our framework is accordingly established in a plug-and-play modular fashion. More concretely, we introduce \textit{personalized adapter} consists of (1) \textit{Task Adapter} from latent weather knowledge and (2) \textit{Parameter Adapter} that converts deep representation from PLMs into weather sequence prediction.  Lightweight operations are employed during local training to enhance the computational efficiency.
\vspace{-8pt}
\paragraph{Task Adapter.} To provide PLMs with richer effective information to activate their sequence modeling capabilities in the target knowledge domain, similar to text-based prompts in language to LLMs in NLP, we constructed task adapters by decomposing the input weather sequences into multimodal latent statistical information,
\begin{equation}
    \gX^k_{\text{Trend}} + \gX^k_{\text{Seasonal}} + \gX^k_{\text{Residual}} = \texttt{Decomp}(\gX^k),
\end{equation}
where the $\gX^k \in \R^{L \times 1}$ denote the $k$-th variable in weather sequence $\gX \in \R^{L \times C}$, the trend component $\gX_{\text{Trend}}$ and the seasonal component $\gX_{\text{Seasonal}}$ captures the underlying long-term weather pattern and encapsulates the repeating short-term weather cycles, respectively. Furthermore, the residual component $\gX_{\text{Residual}}$ represents the remainder of the sequence after the trend and seasonality have been extracted. Note that $\gX_{\text{Trend}}$, $\gX_{\text{Seasonal}}$, and $\gX_{\text{Residual}}$ have the same shape as $\gX$. This decomposition explicitly enables the identification of unusual observation and shifts in seasonal patterns or trends. The $\gX_{\text{Trend}}$, $\gX_{\text{Seasonal}}$, $\gX_{\text{Residual}}$ are used to generate \textit{Task Adapter} via an unified generator as \textbf{the right of Fig.~\ref{fig:framework-one}C \& Fig.~\ref{fig:framework-one}E} that consisting of Token Embedding, Position Embedding, and Temporal Embedding. Specially, we use one-dimensional convolution operation to map each each specific sample $\gX^k$ while keeping raw shape to generate Token Adapter $\mP_{\text{TO}}$. Additionally, we use a trainable lookup table to map each point's explicit position in the entire sequence, to generate Position Adapter $\mP_{\text{PO}}$. Furthermore, we separately encode different time attributes such as minutes, hours, days, weeks, and months, via trainable parameters to dynamically model complex temporal shifts, to generate Temporal Adapter $\mP_{\text{PO}}$. Finally, for each decomposition components, corresponding generated adapters can be obtained by aggregating Token Adapter $\mP_{\text{TO}} \in \R^{L \times C}$, Position Adapter $\mP_{\text{PO}} \in \R^{L \times C}$, and Temporal Adapter $\mP_{\text{TE}} \in \R^{L \times C}$ as $\mP_d = \mP_{\text{TO}}^d + \mP_{\text{PO}}^d + \mP_{\text{TE}}^d$, where $d \in \lbrace \text{Trend}, \text{Seasonal}, \text{Residual}\rbrace$, this means that we can obtain $\mP_\text{Trend}, \mP_\text{Seasonal}, \mP_\text{Residual}$. Details about the generator can be found at \textbf{Appendix~\ref{subsec:tech_details}}.

\paragraph{Lightweight Operations.} To enhance the PLMs' ability to represent complex inputs while reducing the computational burden to adapt to low-resource devices, we introduce lightweight operations, which includes channel-independent patching (CIP, \textbf{Fig.~\ref{fig:framework-one}E})~\cite{nie2022time} for input and efficient tuning of parameters for PLMs. Among them, CIP splits the multivariate sequence into separate univariate sequences, each processed by a single model with length $L_p$. This approach outperforms the original method of mixing channels by treating the variables as independent. It enables the model to capture channel interactions indirectly through shared weights, leading to improved performance without directly modeling the complexity of multiple data channels. The total number of inputs patches is $P = \frac{(T - L_p)}{S} + 2$, where $S$ denotes the horizontal sliding stride. Given these patches $\gX_P^i \in \R^{P \times L_p}$, we use rearrange operation and a trainable FFN embed them as $\hat{\gX_P^i} \in \R^{P \times d_m}$, where $d_m$ is dimensions created by the FFN. We also introduce a low-rank adaptation (LoRA)~\cite{hu2021lora} inside PLMs aiming at language modeling for lightweight fine-tuning of attention layers to achieve cross-modal/-domain knowledge transfer from text sequences to weather sequences with minimal effort.

\paragraph{Parameter Adapter.} To transform the representation of PLMs into actual weather sequence predictions, we construct a \textit{Parameter Adapter} based on the FFN attached behind the PLM,
\vspace{-6pt}
\begin{equation}
    \hat{\gX} = \text{FFN}(\gM_\theta(\text{Concat}[\hat{\mP}_{\text{Trend}}, \hat{\mP}_{\text{Seasonal}}, \hat{\mP}_{\text{Residual}}, \hat{\gX}])),
\end{equation}
where the $\hat{\mP}_{\text{Trend}}$, $\hat{\mP}_{\text{Seasonal}}$, $\hat{\mP}_{\text{Residual}}$, and $\hat{\gX}$ are obtained from CIP based on $\mP_{\text{Trend}}$, $\mP_{\text{Seasonal}}$, $\mP_{\text{Residual}}$, and $\gX$. The underlying motivations are: 1) enabling the PLM receive richer cross-modal representations by aggregating task-specific knowledge, 2) forcing the PLM to generate accurate outputs while maintaining sufficient its prior knowledge by feeding weather information to cross-domain knowledge transferring.

\subsection{Communication}
To avoid data silos, facilitate global knowledge fusion, reduce the negative impact of significant heterogeneity among weather devices on overall performance while maintaining outstanding communication efficiency, we update personalized adapters locally and share low-rank parameters globally; specifically, the PLM $\gM_\theta$ can be formulated as below according to LoRA,
\begin{equation}
    \gM_\theta \rightarrow \gM_{\theta, f} \textcolor{blue}{(Locally)}+ \gM_{\theta, t}\textcolor{red}{(Communication)}
\end{equation}
where $\gM_{\theta, f}$ is the frozen parameter, and the $\gM_{\theta, t}$ denotes the trainable parameter from the low-rank matrices of \textit{query} and \textit{value} in attention modules. During communication between client and server, only $\gM_{\theta, t}$ will be transmitted.

\section{Theorems}
\begin{theorem}[\bf Decomposition Rationality from Time Series]
\label{theorem:1}
    Given a weather series $\gX = \gX_{\text{Trend}, t} + \gX_{\text{Seasonal}, t} + \gX_{\text{Residual}, t}$, $t \in [t_1, t_n ]$. Let $\mE = \{e_1, e_2, ..., e_n\}$ denotes a set of orthogonal bases. Lets $\mE_{\text{Seasonal}} \subseteq \mE$ denote the subset of $\mE$ on which $\gX_{\text{Seasonal}, t}$ has non-zero eigenvalues and $\mE_{\text{Trend}} \subseteq \mE$ denote the subset of $\mE$ on which $\gX_{\text{Trend}, t}$ has non-zero eigenvalues. If $\gX_{\text{Trend}, t}$ and $\gX_{\text{Seasonal}, t}$ are not orthogonal, i.e., $\sum_{i=1}^n \gX_{\text{Trend}, t}^i \gX_{\text{Seasonal}, t}^i \neq 0$, then $\mE_{\text{Trend}} \bigcap \mE_{\text{Seasonal}} \neq 0$, i.e., $\mE$ can not disentangle the two signals onto two disjoint set of bases.
\end{theorem}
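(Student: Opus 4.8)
The plan is to prove the contrapositive in spirit: we will show that if the eigenbasis-supports $\mE_{\text{Trend}}$ and $\mE_{\text{Seasonal}}$ were disjoint, then the two signals would be forced to be orthogonal, contradicting the hypothesis. First I would fix notation by treating $\gX_{\text{Trend}}$ and $\gX_{\text{Seasonal}}$ as vectors in $\R^n$ (the samples at $t_1,\dots,t_n$), and reading the phrase ``has non-zero eigenvalues on a subset of $\mE$'' as the statement that each signal, when expanded in the orthonormal basis $\mE=\{e_1,\dots,e_n\}$, has nonzero coordinates exactly on the index set $\mE_{\text{Trend}}$ (resp.\ $\mE_{\text{Seasonal}}$). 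Concretely, write $\gX_{\text{Trend}} = \sum_{i \in \mE_{\text{Trend}}} a_i e_i$ and $\gX_{\text{Seasonal}} = \sum_{j \in \mE_{\text{Seasonal}}} b_j e_j$ with all $a_i, b_j \neq 0$.

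Next I would compute the inner product using orthonormality of $\mE$: $\langle \gX_{\text{Trend}}, \gX_{\text{Seasonal}} \rangle = \sum_{i \in \mE_{\text{Trend}}} \sum_{j \in \mE_{\text{Seasonal}}} a_i b_j \langle e_i, e_j\rangle = \sum_{i \in \mE_{\text{Trend}} \cap \mE_{\text{Seasonal}}} a_i b_i$. The key observation is that this sum is indexed precisely by $\mE_{\text{Trend}} \cap \mE_{\text{Seasonal}}$, so if that intersection is empty the inner product is the empty sum, namely $0$. Then I would invoke the hypothesis $\sum_{i=1}^n \gX_{\text{Trend},t}^i \gX_{\text{Seasonal},t}^i \neq 0$, which is exactly $\langle \gX_{\text{Trend}}, \gX_{\text{Seasonal}} \rangle \neq 0$, to conclude $\mE_{\text{Trend}} \cap \mE_{\text{Seasonal}} \neq \emptyset$ (equivalently, nonzero as a set). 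This immediately gives that no single orthogonal basis $\mE$ can place the two components on disjoint index sets, which is the disentanglement obstruction claimed.

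The main obstacle is not the algebra — the inner-product computation is a one-liner — but pinning down the correct reading of the statement so the proof is actually rigorous rather than a tautology. In particular I would need to be careful that ``$e_i$ orthogonal bases'' means orthonormal (or at least orthogonal, in which case an extra positive normalizing factor $\|e_i\|^2$ appears in each term but the vanishing/non-vanishing conclusion is unchanged), and that the ``non-zero eigenvalue'' language is genuinely just ``nonzero coefficient in the $\mE$-expansion'' — this is natural if one thinks of each component's rank-one contribution $a_i^2\, e_i e_i^\top$ to a Gram-type operator, whose nonzero eigenvalues sit exactly on $\mathrm{span}\{e_i : a_i \neq 0\}$. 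I would state this interpretation explicitly as the first line of the proof. A secondary subtlety worth a sentence: the converse direction (disjoint supports $\Rightarrow$ orthogonal) shows the condition is tight, so the theorem is really an ``if and only if'' at the level of the contrapositive, and I would remark on this to make clear the decomposition's interpretability claim is sharp.
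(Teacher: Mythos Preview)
Your proposal is correct and follows essentially the same route as the paper: expand both signals in the orthogonal basis, use $\langle e_i,e_j\rangle=0$ for $i\neq j$ to reduce the inner product to $\sum_i a_i b_i$, and infer from its nonvanishing that some index lies in both supports. The only cosmetic difference is that you phrase it as a contrapositive while the paper argues directly, and your discussion of the ``nonzero eigenvalue'' terminology and the orthogonal-versus-orthonormal point is in fact more careful than what the paper writes.
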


\begin{theorem}[\bf Exchange Low-Rank Matrices Ensures Privacy]
\label{theorem:2}
Given a on-device weather modeling framework based on federated learning that gloabl optimization object is $\mF(\theta) = \sum^{i=1}_n p_i f(\lbrace D_i\rbrace;\theta)$, where $f(x; \theta)$ is the loss function of $i$-th client, $\lbrace D_i\rbrace$ is dataset of $i$-th client, and $p_i$ and $\theta$ denote the data distribution weight of client $i$ and the model parameters, respectively. Given that the parameters $\theta$ of the PLM $\gM_\theta$ broadcasted by the server consist of two parts: a frozen part $\gM_{\theta,f}$ and a trainable part $\gM_{\theta,t}$, interacting only the low-rank matrix parameters $\gM_{\theta,l} \subset \gM_{\theta,t}$ is a subset of trainable part $\gM_{\theta,t}$ during each round ensures privacy.
\end{theorem}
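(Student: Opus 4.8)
The statement is qualitative, so the first task is to pin down what ``ensures privacy'' means. I would adopt the information-theoretic reading of \emph{non-identifiability of the raw data from the messages}: let $\Phi_i$ be the map sending the local dataset $\lbrace D_i\rbrace$ (together with the broadcast state $\gM_{\theta,f}$ and the current iterate) to the transmitted low-rank parameters $\gM_{\theta,l}$; privacy then means that for every message $m$ in the image of $\Phi_i$ the fiber $\Phi_i^{-1}(m)$ is positive-dimensional, so an honest-but-curious server observing $m$ cannot reconstruct $\lbrace D_i\rbrace$. (A stronger differential-privacy-style guarantee would require a randomized mechanism on $\gM_{\theta,l}$ and an $(\epsilon,\delta)$ argument; I would state that as a separate corollary rather than fold it into this theorem.) Under this reading the theorem reduces to a rank/dimension count.

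Next I would make the threat model explicit and reduce it to a statement about gradients. State-of-the-art reconstruction attacks in federated learning (gradient-inversion / deep leakage) recover a batch by solving $\nabla_\theta f(\lbrace D_i\rbrace;\theta)=g$ for $\lbrace D_i\rbrace$, so the server's power is bounded by what it learns about the gradient tensor. With LoRA the trainable query/value weights are $W=W_0+BA$ with $W_0\in\gM_{\theta,f}$ frozen and the factors $(B,A)$ of rank $r$ the only trainable degrees of freedom; the induced update $\Delta W=BA$ is confined to the variety of matrices of rank at most $r$, and all the server receives is $(B,A)$, equivalently the locally aggregated $\nabla_B f$ and $\nabla_A f$. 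I would show that knowledge of $(B,A)$ pins down at most the restriction of $\nabla_W f$ to $\mathrm{col}(B)\oplus\mathrm{row}(A)$, an $r$-dimensional slice of the full-layer gradient, so the attacker here is strictly weaker than a full-gradient attacker.

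Then comes the counting argument. The transmitted object has dimension $\dim\gM_{\theta,l}\le 2\,r\,d_{\mathrm{model}}\,\ell$ (two low-rank factors, $r(d_{\mathrm{model}}+d_{\mathrm{model}})$ entries each, over $\ell$ attention layers), which is $O(r\,d_{\mathrm{model}})$ and independent of the sample count, whereas $\lbrace D_i\rbrace$ carries at least $n_i L C$ real degrees of freedom. Whenever $n_i L C>\dim\gM_{\theta,l}$ — the regime of interest, since $r$ is chosen $\ll\min(L,C,d_{\mathrm{model}})$ — the map $\Phi_i$ cannot be injective, and by rank–nullity applied to a generic linearization of $\Phi_i$ (or a semialgebraic fiber-dimension estimate if $\Phi_i$ is kept nonlinear) every fiber has dimension at least $n_i L C-\dim\gM_{\theta,l}>0$. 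On top of this there is an intrinsic gauge ambiguity: $BA=(BG)(G^{-1}A)$ for every invertible $G$, so the factors themselves are determined only up to an $r^2$-dimensional orbit, which already blocks exact recovery of any data point that influences the layer only through $BA$.

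The main obstacle is the honest gap between ``not uniquely reconstructible'', which the dimension count delivers cleanly, and ``ensures privacy'' in a cryptographic or differential sense; I would close it by explicitly restricting the conclusion to information-theoretic non-identifiability, and flag the randomized-mechanism route as the way to upgrade it. A secondary subtlety is the multi-round adversary: the server sees a \emph{sequence} of low-rank messages, so one must check the per-round bound does not accumulate to full-dimensional information. I would handle this by observing that every round's message still factors through the same rank-$r$ LoRA parametrization at each layer, so the union of messages over $T$ rounds spans a subspace of dimension $O(T\,r\,d_{\mathrm{model}})$, which stays below $n_i L C$ for the communication budgets actually used — and I would be explicit that the guarantee degrades once $T$ is large, which is the honest scope of the claim.
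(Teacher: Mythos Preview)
Your proposal is sound as a qualitative argument, but it takes a genuinely different route from the paper's own proof.

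The paper's argument is a short \emph{parameter-partition} sketch. It writes $\theta=[\theta_l,\theta_o]$, observes that only the low-rank factors $\theta_l$ are exchanged while the remaining trainable parameters $\theta_o$ (the personalized adapter) stay local, and simply asserts that $\theta_o$ is where ``user-specific information'' resides, so keeping it local ensures privacy. It adds that the LoRA factors are initialized as (random Gaussian, all-zero), which further obscures data-dependence. A convexity assumption on $f(\cdot;\theta)$ is stated but never actually invoked. There is no threat model, no dimension count, and no analysis of what an adversary could infer from $\theta_l$ alone.

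Your approach, by contrast, is an \emph{information-theoretic non-identifiability} argument: you fix the honest-but-curious server as adversary, reduce its power to what it learns about a rank-$r$ slice of the full-layer gradient, and run a dimension count ($\dim\gM_{\theta,l}=O(r\,d_{\mathrm{model}}\,\ell)\ll n_iLC$) to conclude the data-to-message map has positive-dimensional fibers, supplemented by the $GL(r)$ gauge orbit and a multi-round accumulation bound. This is substantially more rigorous: it actually engages with whether $\theta_l$ leaks $\lbrace D_i\rbrace$, which the paper's argument sidesteps, and it is honest about the scope (non-identifiability rather than differential privacy, degradation with $T$). What the paper's route buys is brevity and a direct tie to the architectural choice of keeping the adapter local; what your route buys is an argument that would survive scrutiny against gradient-inversion attacks and a quantitative regime of validity. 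If you are matching the paper, be aware your proof is doing considerably more work than the original.
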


\section{Experiments}
In this section, we first present the real-world datasets that we have collected and compiled for on-device meteorological variable modeling, and second, we evaluate \textsc{LM-Weather} on these datasets, which involves normal scenario, a data-limited few-shot scenario, and a zero-shot scenario with no training data (OOD). Please refer to \textbf{Appendix} for more detailed information about proposed datasets and additional results of all evaluations (e.g., full results, additional findings \& experiments).
\vspace{-4pt}
\subsection{Datasets} Despite the proliferation of reanalysis data aimed at building frameworks for global climate analysis, these datasets often struggle to model regional weather trend due to: (1) they depend on numerous simulations of atmospheric equations, introducing biases inconsistent with real observations, and (2) they face challenges in refining their scale to suit specific regional applications. Hence, we collected real observational data from various weather stations across different regions. We then organized this data into two series, each comprising two distinct datasets, to underscore the heterogeneity inherent in real-world settings. For detailed information on these datasets, please see the \textbf{Appendix~\ref{sunsec:dataset}}.

\paragraph{On-device Weather Series 1\# (ODW1).} The dataset gathered from 15 ground weather stations across China, Japan, and South Korea, encompasses over 20 variables. It has been divided into two subsets: \textbf{ODW1T} has a heterogeneous time span, meaning the data collection start and end times vary by location. and \textbf{ODW1V} extends \textbf{ODW1T} by adding variability in the observed variables; while one variable remains constant at each station, the others vary.
\vspace{-8pt}
\paragraph{On-device Weather Series 2\# (ODW2).} This dataset consists of data from 36 weather observation stations in the United States, Canada, and Israel, covering 5 different variables with a temporal resolution of 1 hour. Following the dataset setting of \textbf{ODW1}, the dataset was also subdivided into two different dataset, including \textbf{ODW2T} and \textbf{ODW2V}.

\subsection{Setup}
\paragraph{Baseline.} Since our framework is based on a language model, we compare with DL-based SOTA time series models, including Transformer-based methods: Transformer~\cite{wen2022transformers}, Informer~\cite{zhou2021informer}, Reformer~\cite{kitaev2020reformer}, Pyraformer~\cite{liu2021pyraformer}, iTransformer~\cite{liu2023itransformer}, and PatchTST~\cite{nie2022time}, and recent competitive models: GPT4TS~\cite{zhou2023one}, DLinear~\cite{zeng2023transformers} and LightTS~\cite{campos2023lightts}. Note that our setting is FL-based, so we place them in FL and rename them FL-(\textit{baseline}) like FL-Transformer, etc., and all aggregation methods used in above models is FedAvg~\cite{mcmahan2017communication}. In addition, we report a variants of \textsc{LM-Weather}, \textsc{LM-Weather-ave} that based on FedAvg without personalization. Detailed information are in \textbf{Appendix~\ref{subsec:baseline}}.

\paragraph{Basic Setup.} We focus on on-device meteorological variable forecasting and imputation tasks. For forecasting, we create scenarios for predicting a single variable (multivariate-univariate) and for predicting all variables (multivariate-multivariate). The main text only includes multivariate-to-multivariate forecasting results due to page constraints. For multivariate-to-univariate forecasts, refer to the Appendix~\ref{sec:full_res}. In imputation, we use sequence lengths of $\{96, 192, 336, 720\}$ and apply three different masking probabilities $\{25\%, 35\%, 50\%\}$ to represent missing data. The main manuscript shows imputation results for a 50\% masking ratio. For more details on the setup, please refer to \textbf{Appendix~\ref{subsec:tasksetups}}. All our experiments are repeat five times and we report the averaged results.

\subsection{Main Results}
In this section, we evaluate \textsc{LM-Weather} and baseline methods on four on-device meteorological variable modeling datasets in general experiments to validate its effectiveness.
\vspace{-4pt}
\paragraph{Setups \& Results of Forecasting Tasks.} Input length $L_b$ is fixed to 192, and we use four different prediction horizons $L_f \in \{96, 192, 336, 720\}$. Evaluation metrics include mean absolute error (MAE) and root square mean error (RMSE). The brief results is shown in \textbf{Tab.~\ref{tab:main-forecasting}}, where our \textsc{LM-Weather} outperforms all baselines in most cases and significantly so to the majority of them. Particularly noteworthy is the comparison with GPT4TS that involves fine-tuning PLMs, where \textsc{LM-Weather} has an average \textbf{9.8\%} improvement over FL-GPT4TS (MAE reported), and even the variant \textsc{LM-Weather-ave} has an average \textbf{4\%} improvement over FL-GPT4TS. In addition, \textsc{LM-Weather} shows significant average performance gains of \textbf{11.2\%} and \textbf{19\%} w.r.t. MAE relative to other SOTA such as FL-DLinear and FL-PatchTST.
\input{Table/incomplete_table1}
\vspace{-8pt}
\paragraph{Setups \& Results of Imputation Tasks.} Our brief results are in \textbf{Tab.~\ref{tab:main-imputation}}, where \textsc{LM-Weather} consistently surpasses all baselines, outperforming FL-GPT4TS by \textbf{5.7\%}. \textsc{LM-Weather} remains competitive even when compared with the SOTA, FL-PatchTST, FL-LightTS, and FL-DLinear.
\input{Table/incomplete_table2}

\subsection{Few-Shot Learning Experiments}
\input{Table/incomplete_table3}
\input{Table/incomplete_table4}
PLMs have demonstrated remarkable few-shot learning capabilities~\cite{liu2023large}. In this subsection, we assess whether \textsc{LM-Weather} retains this ability in both forecasting and imputation tasks, based on FL for resource-constrained on-device weather modeling environments.
\vspace{-10pt}
\paragraph{Setups and Results of Forecasting \& Imputation.} For both forecasting and imputation tasks, we evaluate the few-shot learning capability in scenarios using limited data, specifically, we use training ratios of 5\% and 15\% (Our full few-shot learning results (training ratio of 5\% and 15\%) can be found at \textbf{Appendix~\ref{subsec:full_fewshot}}). The brief 5\% few-shot learning results on forecasting and imputation tasks are depicted in \textbf{Tab.~\ref{tab:fewshot-forecasting}}  and \textbf{Tab.~\ref{tab:fewshot-imputation}}, respectively. \textsc{LM-Weather} remarkably excels over all baseline methods, and we attribute this to the successful cross-domain knowledge activation in our local dual fine-tuning for the PLM. In addition, our \textsc{LM-Weather}'s communication mechanism also reduces the impact of data heterogeneity on performance, which is reflected in the fact that \textsc{LM-Weather} has an average \textbf{14.7\%} and \textbf{20\%} improvement relative to \textsc{LM-Weather-ave}, in the forecasting and imputation, respectively. In relation to recent SOTA methods such as FL-PatchTST, FL-LightTS, and FL-DLinear, our \textsc{LM-Weather} enhancements surpass \textbf{78\%}, \textbf{14.3\%}, and \textbf{72.8\%} for forecasting, and \textbf{102.1\%}, \textbf{122.1\%}, and \textbf{96.35\%} for imputation. This means that heterogeneity poses challenge to baseline and they struggle to understand weather patterns with limited data. Moreover, it implies that \textsc{LM-Weather} can effectively achieve cross-domain knowledge transfer to PLMs. This benfits from the personalized adapter we integrated into the PLM, coupled with lightweight operations.

\subsection{Zero-Shot Learning (Out of Distribution Modeling) Experiments}
\input{Table/incomplete_table6}
Beyond few-shot learning, PLMs hold potential as effective zero-shot reasoners. We evaluate the zero-shot learning capabilities of \textsc{LM-Weather} within the framework of cross-domain adaption. Specifically, we examine how well a method performs on a dataset when it is optimized on another dataset, where the model has not encountered any data samples from the original dataset. We use forecasting/imputation protocol and evaluate on various cross-domain scenarios. Note that we choose \textsc{LM-Weather-ave} rather than \textsc{LM-Weather} for comparison due to it can obtain an unified model for zero-shot experiments whereas \textsc{LM-Weather} is obtain multiple personalized models. The results are in \textbf{Tab.~\ref{tab:main_zeroshot}}. \textsc{LM-Weather-ave} consistently outperforms the most competitive baselines by a large margin, over \textbf{14.2\%} and \textbf{14.2\%} w.r.t the second-best in MAE reduction, in forecasting and imputation, respectively. We attribute this to our personalized adapter that we implant in PLMs being better at activating the PLM's knowledge transfer and domain-adaption capabilities in a resource-efficient manner when modeling weather variables.

\input{Table/incomplete_table5_new}
\subsection{Framework Analysis Experiments}
We demonstrate the effectiveness of \textsc{LM-Weather} through experiments focused on ablation studies and parameter comparison. For detailed results and further analysis (\textit{e.g.}, hyper-parameter, additional findings and discussion), please refer to the \textbf{Appendix~\ref{appendix:finding}} and \textbf{Appendix~\ref{sec:full_res}}.
\vspace{-4pt}
\paragraph{Ablation Study.} Follow the setting of main experiments, we report our brief ablation results in \textbf{Tab.~\ref{tab:ablation_inc}}, please refer to \textbf{Appendix~\ref{subsec:full_ablation}} for full results. The results indicate a notable drop in performance when we omit the weather decomposition components (\textsc{LM-Weather-A/B/C/D}). Additionally, keeping the decomposition term but removing the associated generator leads to a 14.5\% average performance decline. This suggests that our personalized adapter effectively leverages the PLM's modeling of weather data. Conversely, when we alter the personalized approach by changing the shared low-rank matrix to other trainable parameters (\textsc{LM-Weather-F}), we observe a significant performance drop and increased communication costs. Furthermore, moving from LoRA to fully fine-tuning the attention parameters results in a slight performance gain but incurs over four times the parameter count and a massive increase in communication overhead, which is inefficient for us. These outcomes highlight the benefits of the personalized adapter.
\vspace{-6pt}
\paragraph{Parameter Comparison.} \input{Table/param_stat} The results are shown in \textbf{Tab.~\ref{tab:param_com}}.  \textsc{LM-Weather} ensures top while only communicate about \textbf{3.7\%} of the trainable parameters, compared to the baseline that communicates the full model parameters. When compared with competitive methods, FL-DLinear and FL-LightTS, \textsc{LM-Weather}'s communication overhead is just \textbf{35.9\%} and \textbf{22.6\%} of theirs, respectively, highlighting \textsc{LM-Weather}'s superior communication efficiency.

\section{Conclusion and Future Works}
This paper demonstrate that pre-trained language models (PLMs) are strong foundation models for personalized on-device meteorological variable modeling. We propose \textsc{LM-Weather}, a generic framework to taming PLMs to acquire highly customized models for heterogeneous meteorological data on devices while keeping high efficiency. Concretely, we introduce a lightweight personalize adapter into PLMs and endow it with weather pattern awareness. Experiments on real-world datasets demonstrate that \textsc{LM-Weather} outperforms the SOTA results by a large margin across various tasks. In addition, extensive analyses indicate that \textsc{LM-Weather} can (1) effectively achieve cross-domain knowledge transfers, (2) render device with highly customized model while keeping high efficiency, and (3) generalize under few-shot and zero-shot scenario. In future work, we plan to extend \textsc{LM-Weather} to multimodal weather data (text/image/time-series) and to finer scales.

\small
\bibliographystyle{plain}







\clearpage
\appendix


\section*{\centering \textsc{Appendix:} Personalized Adapter for Large Meteorology Model on Devices: Towards Weather Foundation Models}
\begin{appendices}
The appendix includes missing information from our main text, including: Appendix~\ref{appendix:related} More Related Work; Appendix~\ref{appendix:details} Experimental Details; Appendix~\ref{appendix:proofs} Theorems and Proof; Appendix~\ref{appendix:finding} Additional Finding \& Experiment \& Discussion; Appendix~\ref{sec:full_res} Full Experimental Results and Appendix~\ref{appendix:state} Additional Statements.
\section{More Related Work}
\label{appendix:related}
In this section, we will discuss in detail advances relevant to our work, which include weather variable modeling, personalized federated learning, universal time series learning, and large language models (LLMs) in time series.

\paragraph{From Meteorological Variable Modeling to Weather Forecasting.} Weather conditions play a crucial role in sectors such as transportation, tourism, and agriculture. Meteorological factors, including temperature, humidity, and precipitation, provide essential support and historical insights that enable researchers to analyze weather trends. For decades, Numerical Weather Prediction (NWP)~\cite{bauer2015quiet} has been the prevalent method, employing physical models to simulate and forecast atmospheric dynamics. However, the accuracy of NWP can be compromised by the uncertainty of initial conditions in differential equations, particularly in complex atmospheric processes, and it requires significant computational resources~\cite{chen2023foundation,chen2023spatial,chen2022dynamic,chen2023tempee}.

The recent exponential growth in weather data has prompted a shift from traditional physics-based methods to data-driven approaches using machine learning (ML) and deep learning (DL), which bypass physical constraints in meteorological variables~\cite{chen2023foundation}. DL strategies, with their deeper representational capabilities, generally surpass ML methods. Various deep network architectures have been employed to perform extensive weather modeling using large-scale reanalysis data~\cite{nguyen2023climax,bi2023accurate,man2023w,lam2022graphcast,an2023self}. Yet, these methods tend to focus on global weather patterns, often overlooking the specifics of regional weather variables, and thus fail to offer detailed regional analyses. Moreover, these models require extensive datasets and substantial computational resources—for example, some need to train on 192 NVIDIA Tesla V100 GPUs for 16 days~\cite{bi2023accurate}. Additionally, prevailing models assume centralized data storage, which contrasts with the decentralized data collection from diverse ground weather stations. Our research addresses these challenges by focusing on regional meteorological variables in low-resource settings, aiming to provide reliable analytical support for weather pattern modeling and understanding. 

\paragraph{Personalized Federated Learning.} Federated learning (FL)~\cite{mcmahan2017communication} is a distributed learning paradigm that facilitates the collaborative training of models without exposing data from each participant. Personalized FL (PFL) aims to train a personalized model for each client. Existing PFLs are based on various techniques. Refs.~\cite{hanzely2020lower,li2021ditto,li2020federated} add a regularization term that benefits decomposing the personalized model optimization from global model learning. Refs.~\cite{li2021fedbn,collins2021exploiting} share part of the model and keep personalized layers private to achieve personalization. Ref.~\cite{zhang2020personalized} enables a more flexible personalization by adaptive weighted aggregation. Ref.~\cite{fallah2020personalized} study PFL from a Model-Agnostic Meta-Learning where a meta-model is learned to generate the initialized local model for each client. This paper tackles on-device meteorological variable modeling from PFL perspective.

\paragraph{Universal Time Series Learning.} On-device meteorological variable modeling addresses time series analysis of complex weather patterns on diverse, low-resource devices. We have expanded this to include task-specific time series learning. Recent advancements have enhanced Transformer~\cite{vaswani2017attention} for time series forecasting by integrating signal processing techniques such as patching~\cite{nie2022time}, exponential smoothing~\cite{wu2021autoformer}, decomposition~\cite{zeng2023transformers}, and frequency analysis~\cite{zhou2022fedformer}. Among them, PatchTST~\cite{nie2022time} improves the accuracy of long-term forecasting compared to other Transformer models. ETSFormer~\cite{woo2022etsformer} applies principles of power series smoothing within the Transformer framework to boost efficiency. Similarly, FEDformer~\cite{zhou2022fedformer} merges the Transformer with seasonal \& trend decomposition, offering improved performance and efficiency. Autoformer~\cite{wu2021autoformer} leverages sequence periodicity for better dependency discovery and representation, excelling in both efficiency and accuracy.

While these methods excel in efficiency and accuracy, they are typically tailored for narrow-range forecasting on select classical time series datasets. Real-world weather data, however, often displays more complex patterns and interconnected variable relationships. Furthermore, weather modeling extends beyond forecasting, rendering these methods less effective for weather sequences. To improve modeling for intricate weather sequences, models need the flexibility to adjust to complex distributions and various tasks with minimal training. The ideal weather models would capture weather patterns accurately, facilitating knowledge transfer, such as between regions. However, creating versatile weather models remains a challenging endeavor. Recent studies have started to examine the potential of large-scale climate models~\cite{nguyen2023climax,bi2023accurate}, utilizing simulated datasets advances. Yet, their generalizability is hindered by data differences, complex architectures, and the vast number of model parameters.

\paragraph{LLMs in Time Series.} Large language models (LLMs) have spurred advances in natural language processing (NLP). Although time series modeling hasn't seen similar leaps, the impressive capabilities of LLMs have led to their use in this field. In general, pre-trained LLMs are often fine-tuned or reprogrammed to model time series~\cite{chang2023llm4ts,jin2023time,cao2023tempo,zhou2023one}. Among them, \textsc{PromptCast}~\cite{xue2023promptcast} and \textsc{HealthLearner}~\cite{liu2023large} treat time series as "text sequences," inputting them directly into LLMs and using prompts for forecasting. \cite{zhou2023one} dencodes time series as embeddings for LLM output, showing LLMs' strength in time series analysis. \textsc{LLM4TS}~\cite{chang2023llm4ts} uses a two-stage fine-tuning approach to adapt LLMs to time series data. \textsc{TEMPO}~\cite{cao2023tempo} breaks down time series features to leverage LLMs in prediction tasks, while \textsc{TIME-LLM}~\cite{jin2023time} fine-tunes LLMs with multimodal data, integrating relevant text prompts for efficient analysis. However, these approaches focus on centralized time series modeling and overlook the complexities of real-world distributed settings. Weather data, in particular, has unique challenges like heterogeneity from geographic factors and privacy concerns, making central training methods both risky and difficult.

\section{Experimental Details}
\label{appendix:details}
\subsection{Datasets}
\label{sunsec:dataset}
Despite the proliferation of reanalysis data aimed at building frameworks for global climate analysis, these datasets often struggle to model regional weather trend due to: (1) they depend on numerous simulations of atmospheric equations, introducing biases inconsistent with real observations, and (2) they face challenges in refining their scale to suit specific regional applications. Hence, we collected real observational data from various weather stations across different regions. We then organized this data into two series, each comprising two distinct datasets, to underscore the heterogeneity inherent in real-world settings.

\paragraph{On-device Weather Series 1\# (ODW1).} The dataset gathered from 15 ground weather stations across China, Japan, and South Korea, encompasses over 20 variables. It has been divided into two subsets: \textbf{ODW1T} has a heterogeneous time span, meaning the data collection start and end times vary by location. and \textbf{ODW1V} extends \textbf{ODW1T} by adding variability in the observed variables; while one variable remains constant at each station, the others vary. The temporal resolution of the dataset is 1h. Details are presented in \textbf{Tab.~\ref{tab:dataset_weather-time}} and \textbf{Tab.~\ref{tab:dataset_weather-var}}.
\input{Table/dataset_weather_time}
\input{Table/dataset_weather_var}
\paragraph{On-device Weather Series 2\# (ODW2).} This dataset consists of data from 36 weather observation stations in the United States, Canada, and Israel, covering 5 different variables with a temporal resolution of 1 hour. Following the dataset setting of \textbf{ODW1}, the dataset was also subdivided into two different dataset, including \textbf{ODW2T} and \textbf{ODW2V}. Detailed information are presented in \textbf{Tab.~\ref{tab:dataset_uscaircn_time}} and \textbf{Tab.~\ref{tab:dataset_uscaircn_var}}.
\input{Table/dataset_uscaircn_time}
\input{Table/dataset_uscaircn_var}
\input{Table/abbrev}
\paragraph{Remark.} Four standard steps were performed during the collection and compilation of these dataset, as shown below:
\begin{itemize}
    \item[\textbf{[1]}] \textbf{Collection of Raw Meteorological Data.} Raw data collection represents the foundational and initial step in constructing our dataset. We procure open-source raw data from various national meteorological centers and data repositories, including the National Meteorological Science Data Center of China\footnote{\url{https://data.cma.cn/}}, Korea Meteorological Administration\footnote{\url{https://www.kma.go.kr/}}, Global Surface Meteorological Observations Historical Dataset~\footnote{\url{https://k.data.cma.cn/mekb/?r=dataService/cdcindex&datacode=A.0020.0002.S001}}, Canadian Meteorological Data Center~\footnote{\url{https://weather.gc.ca/}}, and World Weather Data Repository from Kaggle~\footnote{\url{https://www.kaggle.com/datasets}}. This process ensures that the collected weather data from these sources are consistent in terms of temporal resolution and variable dimensions. All raw data are open-source and can be freely utilized or modified.
    \item[\textbf{[2]}] \textbf{Selection of Critical Meteorological Variables.} To support personalized on-device meteorological variable modeling and enhance regional weather forecasting reliability, we selected twenty representative meteorological variables. These variables, including temperature, barometric pressure, relative humidity, and precipitation, were chosen based on their significant impact on weather conditions. Detailed definitions, physical descriptions, and units of these selected variables are provided in \textbf{Table~\ref{tab:vars_abb}}.
    \item[\textbf{[3]}] \textbf{Ensuring Completion of Meteorological Time Series.} In this step, we primarily focus on ensuring the completeness of weather time series data collected from ground weather stations. Incomplete weather time series can generate unreliable predictions, potentially leading to significant unforeseen losses. Most ground weather stations are susceptible to unpredictable events such as power outages and equipment damage, which may result in data gaps. To enhance dataset completeness, we meticulously examined the raw data for missing values across various timestamps and employed a linear interpolation strategy to fill these gaps.
    \item[\textbf{[4]}] \textbf{Handling of Outliers.} Outliers are common in weather time series data. We distinguish between factual outliers, typically caused by extreme weather events (e.g., heavy rainfall, typhoons, thunderstorms), and non-factual outliers, often due to observational device anomalies or sensor malfunctions at weather stations. We identify significant deviations in a weather variable—for instance, a sudden increase from an average rainfall of 2 mm to 200 mm—as outliers. These are manually corrected; initially, the values are set to zero and then replaced using linear interpolation, reflecting the gradual nature of weather phenomena.
\end{itemize}
\paragraph{Visualisation.} We hope to deepen the reader's understanding of the datasets we have collected and compiled by providing standard visualizations. Considering the overall size of the datasets and the large number of meteorological variables, we have provided visualisations of representative variables here for reference. The visualisation of OWD1 is shown in Fig.~\ref{fig:data1_vis}. Due to the number of devices involved in the OWD2 dataset, we have divided it into two consecutive images for presentation, as shown in Fig~\ref{fig:data2_vis_1st} and Fig.~\ref{fig:data2_vis_2st}.
\begin{figure}[h!]
    \centering
    \includegraphics[width=1\textwidth]{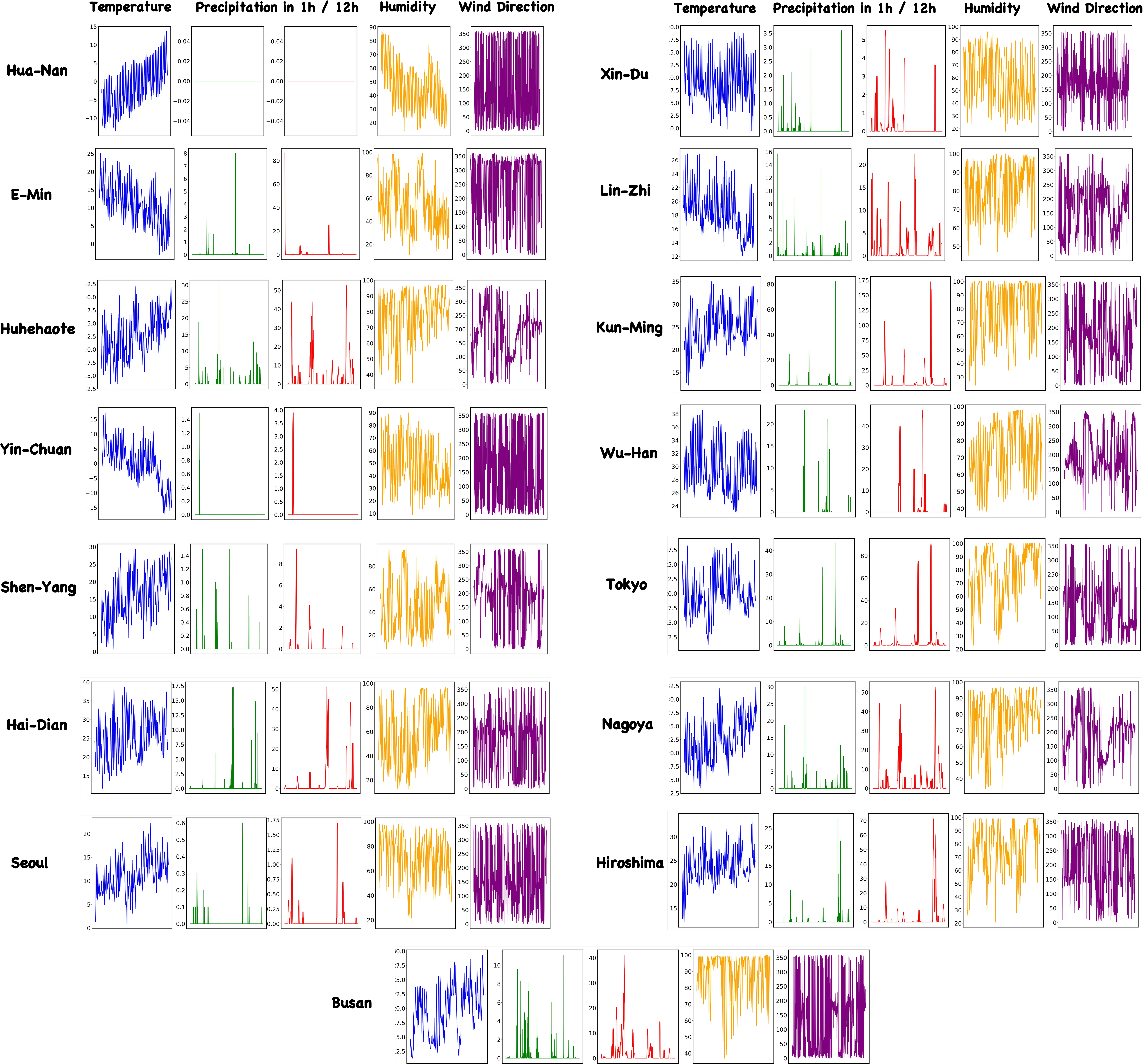}
    \caption{Visualisation of partial variables in ODW1 dataset, where we have selected the first 1,000 time points for presentation. The data distribution from different ground weather stations exhibit significant heterogeneity, and even though the trends of some variables may be similar, there are serious differences in magnitudes. The selected variables are, from left to right, temperature, precipitation in 1-hour/12-hour, humidity, and wind direction.}
    \label{fig:data1_vis}
\end{figure}

\begin{figure}[h!]
    \centering
    \includegraphics[width=1\textwidth]{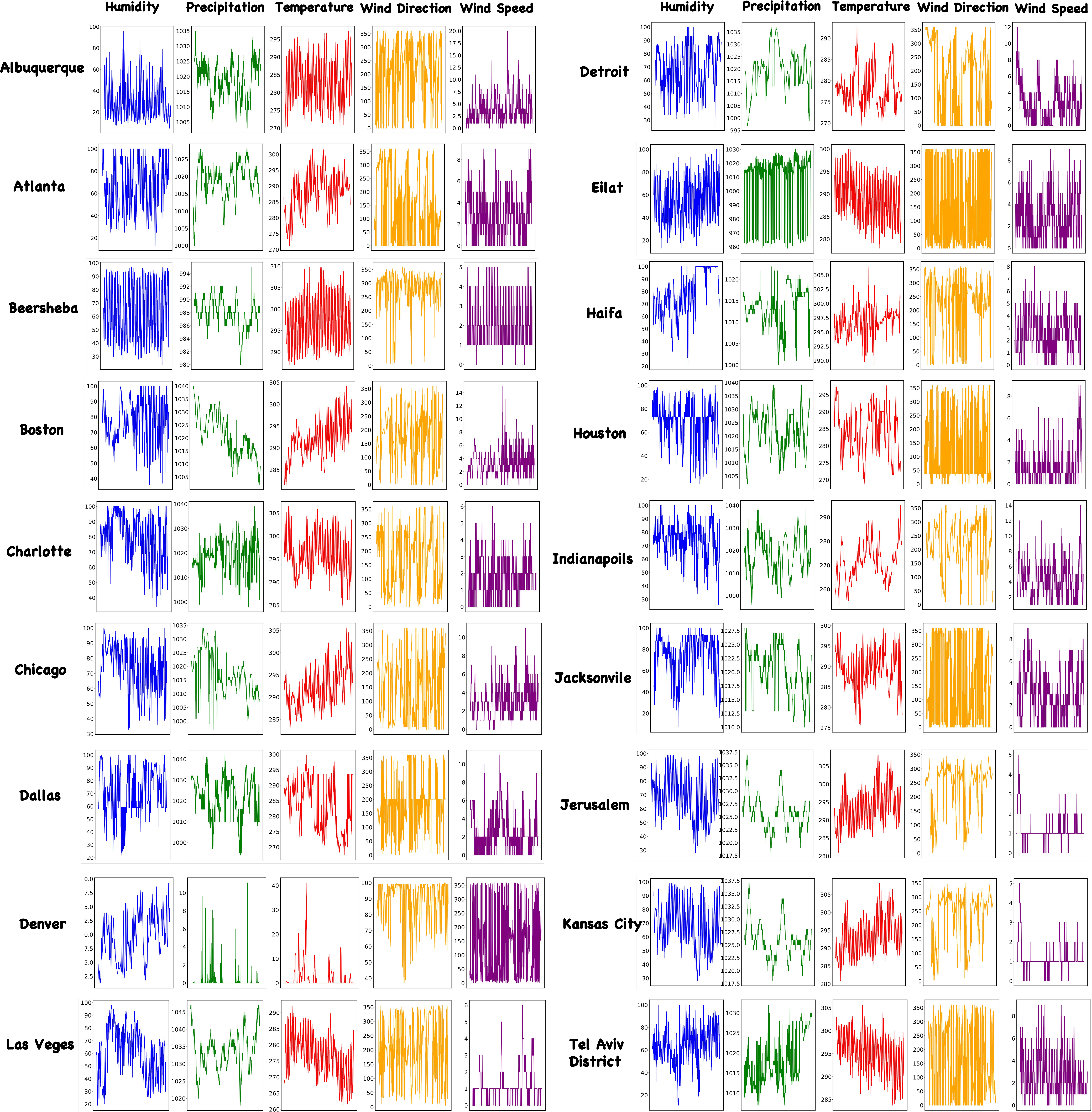}
    \caption{Visualisation of partial variables in ODW2 dataset, where we have selected the first 1,000 time points for presentation. The data distribution from different ground weather stations exhibit significant heterogeneity, and even though the trends of some variables may be similar, there are serious differences in magnitudes. The selected variables are, from left to right, humidity, precipitation, temperature, wind direction, and wind speed.}
    \label{fig:data2_vis_1st}
\end{figure}

\begin{figure}[tbh]
    \centering
    \includegraphics[width=1\textwidth]{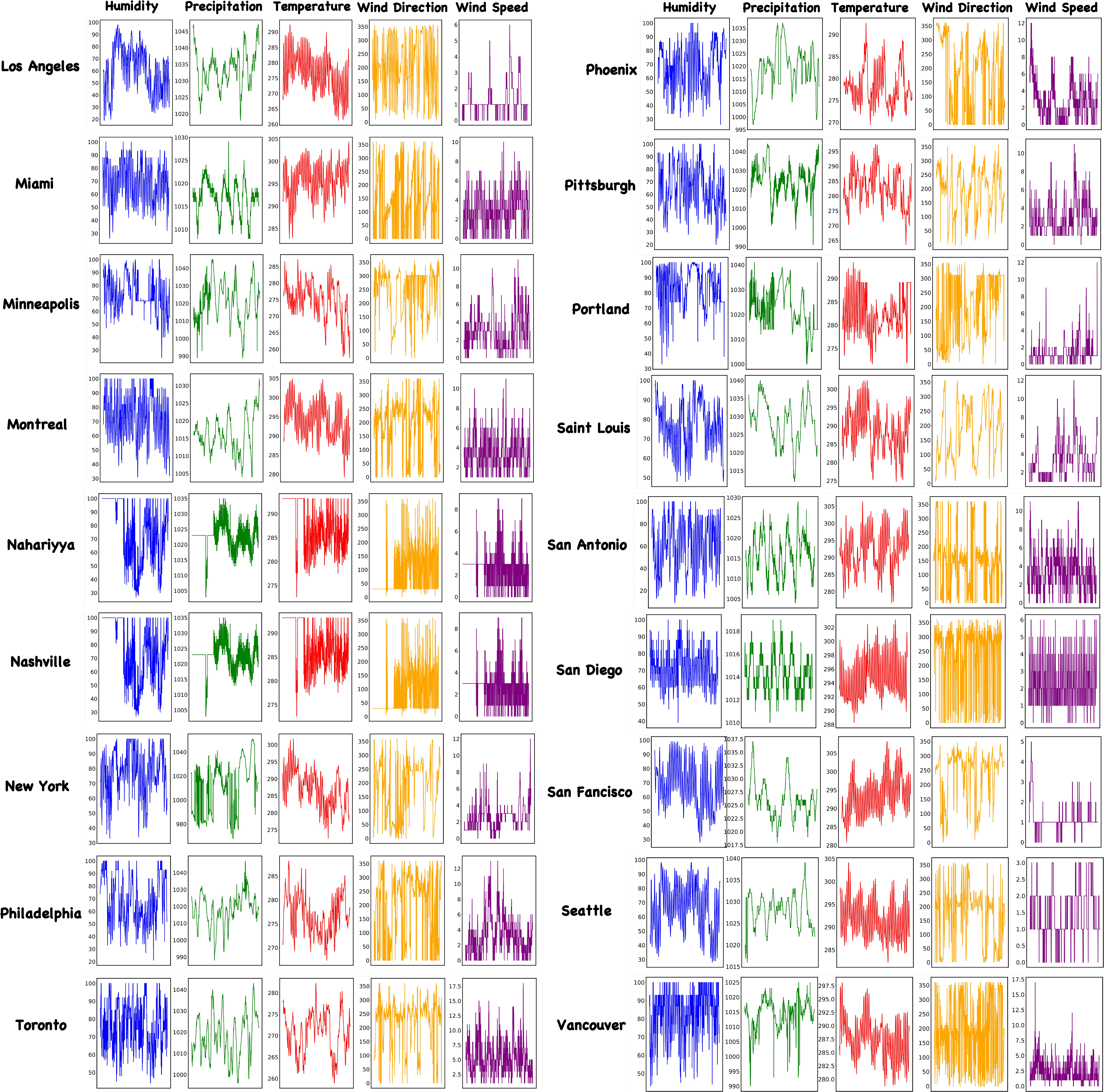}
    \caption{(\textbf{Figure~\ref{fig:data2_vis_1st}} continued) Visualisation of partial variables in ODW2 dataset, where we have selected the first 1,000 time points for presentation. The selected variables are, from left to right, humidity, precipitation, temperature, wind direction, and wind speed.}
    \label{fig:data2_vis_2st}
\end{figure}

\subsection{Baselines}
\label{subsec:baseline}
We compare with state-of-the-art time series analysis models and put them into Federated Learning environments, including Transformer-based methods like Transformer~\cite{vaswani2017attention}, Informer~\cite{zhou2021informer}, Reformer~\cite{kitaev2020reformer}, Pyraformer~\cite{liu2021pyraformer}, iTransformer~\cite{liu2023itransformer}, and PatchTST~\cite{nie2022time}, and recent competitive models including GPT4TS~\cite{zhou2023one}, DLinear~\cite{zeng2023transformers} and LightTS~\cite{zhang2022more}, detailed information about baselines is below:
\begin{itemize}
    \item \textbf{Transformer.}~\cite{vaswani2017attention} This model uses a self-attention mechanism, popular for time series prediction tasks, to efficiently and accurately learn relationships within a sequence and contextual information.
    \item \textbf{Informer.}~\cite{zhou2021informer} An optimized Transformer-based model for long-range time series prediction. It uses ProbSparse self-attention for efficiency, processes long inputs effectively, and employs a fast prediction decoder.
    \item \textbf{Reformer.}~\cite{kitaev2020reformer} This model improves Transformer efficiency by using locality-sensitive hashing for attention and reversible residual layers. It offers better memory efficiency and speed for lengthy sequences without sacrificing performance.
    \item \textbf{Pyraformer.}~\cite{liu2021pyraformer} It features hierarchical pyramidal attention modules with binary trees to capture temporal dependencies across different ranges efficiently, both in time and memory complexity.
    \item \textbf{iTransformer.}~\cite{liu2023itransformer} The iTransformer adds attention and feedforward networks to the inverse dimension. It embeds time points as variable tokens, using attention to capture multivariate correlations and feedforward networks for nonlinear representation of each token.
    \item \textbf{PatchTST.}~\cite{nie2022time} This method divides the time series into patches at the subseries level for input to the Transformer. Each channel holds a univariate time series, sharing the same embedding and Transformer weights across all series.
    \item \textbf{DLinear.}~\cite{zeng2023transformers} DLinear integrates decomposition schemes from Autoformer and FEDformer with linear layers to model time series data tables. It effectively summarizes trend and seasonal components, enhancing performance on datasets rich in trends.
    \item \textbf{LightTS.}~\cite{zhang2022more} A lightweight structure based on a simple MLP. It utilizes two downsampling strategies—spaced and sequential sampling—on the MLP structure, capitalizing on the fact that downsampled time series generally maintain most of their original information.
    \item \textbf{GPT4TS.}~\cite{zhou2023one} This model is designed for time series analysis across various scenarios, achieved by fine-tuning a pre-trained language model, specifically GPT2, for the time series domain. It's important to note that for a fair comparison, our baseline setup differs from the original publication's configuration. Instead of using the first six layers of GPT2 as the backbone, we align with our approach and utilize only the first five layers.
\end{itemize}
In addition, pre-trained language models (PLMs) are the key component of our \texttt{LM-Weather}, we use different PLMs as the backbone to demonstrate the PLM can as the strong weather foundation model for on-device weather modeling. We use GPT-2 as the default setting, and BERT~\cite{devlin2018bert}, LLaMA~\cite{touvron2023llama} as the alternatives.
\begin{itemize}
    \item \textbf{BERT.}~\cite{devlin2018bert} BERT, short for Bidirectional Encoder Representations from Transformers, is a deep learning model that uses the Transformer architecture. It understands the context of words by analyzing text in both directions. When used as a baseline for evaluation, we only employ the first 5 layers of the pre-trained BERT.
    \item \textbf{GPT-2.}~\cite{radford2019language} Developed by OpenAI, GPT-2 is a language model that can generate coherent and diverse text based on a given prompt.  In our research, we utilize the first 5 layers of the pre-trained GPT-2-base.
    \item \textbf{LLaMa.}~\cite{touvron2023llama} LLaMa stands for Large Language Model Meta AI and is a series of cutting-edge language models with sizes ranging from 7B to 65B parameters. They offer top-notch performance with less computational power and resources. In our research, we utilize the first 4 layers of the 3B LLaMa model.
\end{itemize}

\subsection{Task Setups}
\label{subsec:tasksetups}
We evaluate our proposed \textsc{LM-Weather} using four distinct on-device weather modeling datasets, each with tailored settings for various tasks. The specific task settings for these datasets are detailed in \textbf{Tab.~\ref{tab:tasksetup}}. Additionally, the specific tasks and scenarios for the on-device weather forecasting/imputation vary by dataset, as outlined in \textbf{Tab.~\ref{tab:task_scenarios}}.

\begin{table}[h!]
  \centering
  \caption{Task setup for different datasets during the evaluation. Note that for the imputation task there are actually no historical observations, but rather they are performed on a single long sequence.}
  \resizebox{0.98\textwidth}{!}{
    \begin{tabular}{c|c|c|c|c}
    \toprule
    \textbf{Dataset} & \textbf{Task} & \textbf{Historical Observation Horizon} & \textbf{Prediction Horizon} & \textbf{Random Masking Ratio} \\
    \midrule
    \multirow{2}[4]{*}{\textbf{ODW1T}} & Forecasting & $192$   & \multirow{8}[15]{*}{$\{96, 192, 336, 720\}$} & N \\
\cmidrule{2-3}\cmidrule{5-5}          & Imputation & Consistent with the prediction horizon &       & $\{25\%, 35\%, 50\%\}$ \\
\cmidrule{1-3}\cmidrule{5-5}    \multirow{2}[4]{*}{\textbf{ODW1V}} & Forecasting & $192$   &       & N \\
\cmidrule{2-3}\cmidrule{5-5}          & Imputation & Consistent with the prediction horizon &       & $\{25\%, 35\%, 50\%\}$ \\
\cmidrule{1-3}\cmidrule{5-5}    \multirow{2}[4]{*}{\textbf{ODW2T}} & Forecasting & $192$   &       & N \\
\cmidrule{2-3}\cmidrule{5-5}          & Imputation & Consistent with the prediction horizon &       & $\{25\%, 35\%, 50\%\}$ \\
\cmidrule{1-3}\cmidrule{5-5}    \multirow{2}[3]{*}{\textbf{ODW2V}} & Forecasting & $192$   &       & N \\
\cmidrule{2-3}\cmidrule{5-5}          & Imputation & Consistent with the prediction horizon &       & $\{25\%, 35\%, 50\%\}$ \\
\bottomrule
    \end{tabular}}
  \label{tab:tasksetup}
\end{table}%

\begin{table}[tbh]
  \centering
    \caption{Summary of framework evaluation scenarios for various datasets. \textbf{Scenario 1/2/3/4} (in forecasting) refers to multivariate to univariate forecasting, where all historical variables are used to predict a single future variable. \textbf{All} represents multivariate to multivariate forecasting, meaning all variables predict all others. The symbol "-" indicates a non-existent scenario for that dataset. \textbf{Scenario 1/2/3} (in imputation) indicates different masking ratios for the original weather sequences.}
  \resizebox{0.98\textwidth}{!}{
    \begin{tabular}{c|c|c|c|c|c|c|c|c}
    \toprule
    \multirow{2}[4]{*}{\bf Dataset} & \multicolumn{5}{c|}{\bf Forecasting} & \multicolumn{3}{c}{\bf Imputation} \\
\cmidrule{2-9}       & Scenario 1 & Scenario 2 & Scenario 3 & Scenario 4 & Scenario 5 & Scenario 1 & Scenario 2 & Scenario 3 \\
    \midrule
     \bf ODW1T & Temperature & Humidity & Wind Speed & Surface Temperature & All & \multirow{4}[2]{*}{25\%} & \multirow{4}[2]{*}{35\%} & \multirow{4}[2]{*}{50\%} \\
    \bf ODW1V & Temperature & - & - & - & All &    &    &  \\
    \bf ODW2T & Temperature & Humidity & - & - & All &    &    &  \\
    \bf ODW2V & Humidity & - & - & - & All &    &    &  \\
    \bottomrule
    \end{tabular}}
  \label{tab:task_scenarios}
\end{table}%

\subsection{Implementation}
We mainly follow the experimental configurations across all baselines within a unified evaluation pipeline in \url{https://github.com/thuml/Time-Series-Library} for fair comparison. Specially, we use GPT-2-base as the default backbone model unless state otherwise. All our experiments are repeat five times and we report the averaged results. Our detailed model configurations are in Appendix~\ref{sbsec:config}. All the algorithm implementations and designs in this study are based on Py torch and the algorithms are run on two RTX3090 GPUs 24GB.

\subsection{Technical Details}
\label{subsec:tech_details}
\paragraph{Reversible Normalization.}
In time series analysis, statistical properties like mean and variance often shift over time, indicating distributional changes in the data. To address this, we've incorporated Reversible Normalization (RevIn)~\cite{kim2021reversible} into our \textsc{LM-Weather}. Specifically, we've integrated RevIn into our Task Adapter Generation. This introduces two dynamic factors that adaptively normalize segments of the meteorological variable sequence $\gX$, or their decomposed components (Trend $\gX_{\text{Trend}}$, Seasonal $\gX_{\text{Seasonal}}$, Residual $\gX_{\text{Residual}}$), enhancing the accuracy of meteorological variable modeling. Specifically, for the trend component of $\gX$, \textit{i.e.,}, $\gX_{\text{Trend}}$, its transformed value $\gX_{\text{Trend}}'$ can be given by:
\begin{equation}
    \gX_{\text{Trend}}' = \gamma_T \left(\gX_{\text{Trend}} - \frac{\mathbb{E}[\gX_{\text{Trend}}]}{\sqrt{\text{Var}\left[\gX_{\text{Trend}}\right]} + \epsilon_T} \right)+ \beta_T
\end{equation}
where $\mathbb{E}[\gX_{\text{Trend}}]$ and $\text{Var}\left[\gX_{\text{Trend}}\right]$ are the instance-specific mean and variance, respectively. $\gamma_T$ and $\beta_T$ are the trainable parameters for this component. This transformation is also applied to both the seasonal and residual components.
\paragraph{Pre-trained Language Model (PLM).} In \textsc{LM-Weather}, we do not change the main architecture of the PLM, but use the parameter-efficient fine-tuning (PEFT) strategy to avoid large-scale parameter variations to ensure high efficiency on resource-constrained weather devices, and in this way, to achieve more reliable cross-domain knowledge transfer. Specifically, we introduce LoRA in the local PLM, which allows only \textbf{1.5\%} of the PLM parameters to be trained while the rest remain frozen, as shown in Fig. Note that LoRA is only applied to the query and value of each Attention in the PLM, and the resulting low-rank matrices are used for global sharing between the client and the server.

\begin{figure}[tbh]
    \centering
    \includegraphics[width=1\textwidth]{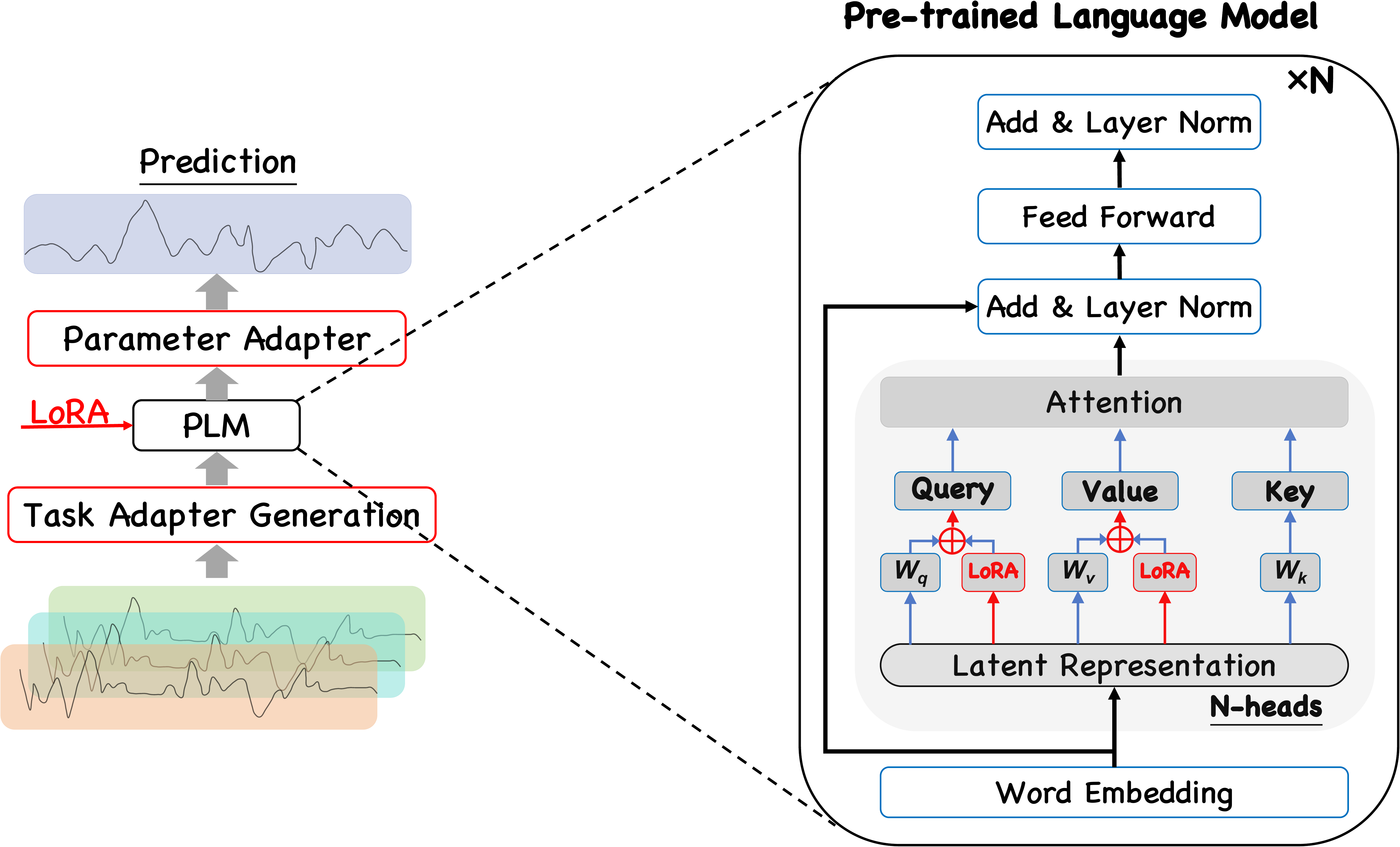}
    \caption{Schematic diagram of the PLM in \textsc{LM-Weather}, where we introduced LoRA to the PLM, to achieve more reliable cross-domain knowledge transfer while at the same time ensuring greater efficiency in adapting to low-resource weather devices.}
    \label{fig:framework_appendix}
\end{figure}

\paragraph{Low-Rank Adaption (LoRA).} To achieve more reliable cross-domain knowledge transfer (i.e., from natural language to complex weather sequences) while guaranteeing higher efficiency, we introduce LoRA~\cite{hu2021lora}, a parameter-efficient fine-tuning method for large language models, into PLM. Specifically, LoRA is applied to the \textit{Query} and \textit{Value} of each Attention layer by creating low-rank matrices for two pre-trained parameters $W_q$ and $W_k$:
\begin{equation}
    \textsc{Query} = W_q \gX + A_q B_q, \quad \textsc{Value} = W_v \gX + A_v B_v,
\end{equation}
where $\gX$ denote the latent representation from input weather sequences through PLM's word embedding layer, $A_q \in \R^{d \times r}$ and $B_q \in \R^{r \times d}$ are low-rank matrices created from $W_q \in \R^{d \times d}$, $A_v \in \R^{d \times r}$ and $B_v \in \R^{r \times d}$ are low-rank matrices created from $W_v \in \R^{d \times d}$, $d$ is the number of dimensions, $r$ is the rank, and $r \ll d$. It's important to note that only the low-rank matrices $A_q, B_q, A_v, B_v$ are trainable; the others remain fixed during training. Initially, $A_q$ and $A_v$ are set with random Gaussian values, and $B_q$ and $B_v$ start as zero at the beginning of training.

\paragraph{Task Adapter Generation.} The $\gX_{\text{Trend}}$, $\gX_{\text{Seasonal}}$, $\gX_{\text{Residual}}$ obtained from decomposition are used to generate \textit{Task Adapter} via an unified generator as \textbf{Fig.~\ref{fig:framework-one}B} that consisting of Token Embedding, Position Embedding, and Temporal Embedding. Specially, we use one-dimensional convolution operation to map each each specific sample $\gX^k \in \R^{T \times 1}$ while keeping raw shape to generate \textsc{Token Adapter} $\mP_{\text{TO}} \in \R^{T \times C}$, as
\begin{equation}
    \mP^k_{\text{TO}} = \textsc{Conv1D}(\gX^k), \quad \mP_{\text{TO}} = \textsc{Conv1D}(\gX)
\end{equation}
Additionally, we use a trainable lookup table to map each point's explicit position in the entire sequence, to generate \textsc{Position Adapter} $\mP_{\text{PO}} \in \R^{T\times C}$, as:
\begin{equation}
    \mP_{\text{PO}} = \mE(\textsc{Index}(\gX)),
\end{equation}
where $\mE(\cdot)$ is the trainable lookup table, and $\textsc{Index}(\cdot)$ is a function that achieve the indices of each point's locations of weather sequence $\gX$. Furthermore, we separately encode different time attributes such as minutes, hours, days, weeks, and months, using trainable parameters to dynamically model complex temporal shifts, to generate \textsc{Temporal Adapter} $\mP_{\text{TE}}$, as
\begin{equation}
    \mP_{\text{TE}} = \sum_{\alpha \in \lbrace \text{mins}, \text{hours}, \text{days}, \text{weeks}, \text{months}\rbrace} \mE_\alpha(\gX)
\end{equation}
where $\alpha$ represents different temporal attributes, $\mE_{\alpha}$ denotes the trainable lookup table for each temporal attributes.
Finally, for each decomposition components, corresponding generated adapters can be obtained by aggregating Token Adapter $\mP_{\text{TO}} \in \R^{L \times C}$, Position Adapter $\mP_{\text{PO}} \in \R^{L \times C}$, and Temporal Adapter $\mP_{\text{TE}} \in \R^{L \times C}$ as $\mP_d = \mP_{\text{TO}}^d + \mP_{\text{PO}}^d + \mP_{\text{TE}}^d$, where $d \in \lbrace \text{Trend}, \text{Seasonal}, \text{Residual}\rbrace$, this means that we can obtain $\mP_\text{Trend}, \mP_\text{Seasonal}, \mP_\text{Residual}$.

\subsection{Theoretical Insights on Personalized Adapter}
The effectiveness and superiority of our proposed Personalized Adapter have been demonstrated in sufficient ablation studies (please refer to \textbf{Table~\ref{tab:ablation_inc}} in the main text). Here, we will discuss theoretical insights that further supports the effectiveness of Personalized Adapter. Personalized Adapter can effectively capture potential pattern in meteorological variable time series, which comprises both the Task Adapter and the FFN-based Parameter Adapter. Specifically, our focus primarily revolves around the Task Adapter active in extracting representations, which including Token/Positional/Temporal Embedding for transforming meteorological variable time series. 

Let the weather sequence be $\mathbf{X}=(\mathbf{x}_1,\mathbf{x}_2,\cdots,\mathbf{x}_T)$, where $\mathbf{x}_t\in\mathbb{R}^d$ Is the observed value with $d$ variables at $t$ moment. Let $\mathcal{X}$ represent the function space to which $\mathbf{x}_t$ of a weather sequence belongs, and $\mathcal{Z}$ denote the function space to which the implicit representation $\mathbf{z}_t$ belongs.
Token Embedding can be interpreted as a mapping $f_{\theta}: \mathcal{X} \rightarrow \mathcal{Z}$. According to the Kolmogorov-Arnold representation theorem, for any continuous function $f \in C([0,1]^d)$, there exist $2d+1$ continuous functions $\phi_q \in C([0,1])$ and $\psi_q \in C([0,1])$ such: $$ f(\mathbf{x}) = \sum_{q=0}^{2d} \phi_q(\sum_{p=1}^d \psi_q(x_p)), $$which means Token Embedding can construct a high-dimensional nonlinear mapping from multiple one-dimensional functions, capturing complex patterns within weather sequences. Positional Embedding introduces a vector $\mathbf{p}_t$ for each step, enabling the model to differentiate between observations at different time steps. For any two steps $t_1$ and $t_2$, their position vectors $\mathbf{p}_{t_1}$ and $\mathbf{p}_{t_2}$ satisfy: $$ ||\mathbf{p}_{t_1} - \mathbf{p}_{t_2}||_2 = \sqrt{2k(1-\cos \frac{2\pi(t_1 - t_2)}{10000^{1/k}})}. $$ The growing distance between $t_1$ and $t_2$ with an increasing time gap mirrors the relative positioning of time steps, aiding the model in grasping temporal dependencies. Additionally, the sine-cosine function's periodicity resonates with weather data's cyclical behavior, helping the model to learn from these recurrent patterns.

Finally, consider the role of Temporal Embedding from the view of matrix decomposition. Suppose the temporal matrix $\mathbf{T}$ has a rank of $r$, it can be decomposed as $\mathbf{z}_t $ $\otimes \mathbf{T} = \sum_{i=1}^r (\mathbf{z}_t \otimes \mathbf{u}_i)\mathbf{v}_i^T$. Temporal Embedding transforms the original sequence by scaling and rotating it to represent different interaction patterns. Using singular value decomposition, the top $r$ singular vectors distill the core structure of the time-based matrix. This allows Temporal Embedding to intuitively learn a compact representation of weather sequences, highlighting the primary interactions between variables. In optimizing Personalized Adapter within \textsc{LM-Weather}, the focus lies solely on Personalized Adapter and the attention layer influenced by LoRA during local updates. As Personalized Adapter undergoes solely local updates while sharing low-rank matrices globally, akin to layer-wise optimization in PLM, the efficacy of its optimization process can be theoretically substantiated by the theoretical analysis provided in~\cite{tan2022fedproto}.

\subsection{Evaluation Metrics}
For evaluation metrics, as~\cite{chen2023prompt}, we utilize the mean absolute error (MAE) and root mean square error (RMSE) for both forecasting and imputation. The calculation of these metrics are as follows:
\begin{equation}
    \text{MAE} = \frac{1}{T} \sum_{i=1}^{T} |\mY_i - \hat{\mY}_i|,  \qquad  \qquad\text{RMSE} = \sqrt{\frac{1}{T} \sum_{i=1}^{T} (\mY_i - \hat{\mY}_i)^2},
\end{equation}
where $T$ denotes the number of data points (\textit{i.e.}, prediction horizon in our cases), $\mY_i$ and $\hat{\mY}_i$ are the $i$-th ground truth and prediction where $i \in \lbrace 1, ..., T\rbrace$.

\subsection{Model Configurations}
\label{sbsec:config}
The configurations of our \textsc{LM-Weather} for different tasks and datasets are summarized in \textbf{Tab.~\ref{tab:experiment_config}}. We consistently use the AdamW~\cite{loshchilov2017decoupled} optimizer in all experiments.
\input{Table/setup_gen}

\section{Theorems and Proofs}
\label{appendix:proofs}
\begin{theorem}[\bf Decomposition Rationality from Time Series]
    Given a weather time series $\gX = \gX_{\text{Trend}, t} + \gX_{\text{Seasonal}, t} + \gX_{\text{Residual}, t}$, $t \in [t_1, t_n ]$. Let $\mE = \{e_1, e_2, ..., e_n\}$ denotes a set of orthogonal bases. Lets $\mE_{\text{Seasonal}} \subseteq \mE$ denote the subset of $\mE$ on which $\gX_{\text{Seasonal}, t}$ has non-zero eigenvalues and $\mE_{\text{Trend}} \subseteq \mE$ denote the subset of $\mE$ on which $\gX_{\text{Trend}, t}$ has non-zero eigenvalues. If $\gX_{\text{Trend}, t}$ and $\gX_{\text{Seasonal}, t}$ are not orthogonal, i.e., $\sum_{i=1}^n \gX_{\text{Trend}, t}^i \gX_{\text{Seasonal}, t}^i \neq 0$, then $\mE_{\text{Trend}} \bigcap \mE_{\text{Seasonal}} \neq 0$, i.e., $\mE$ can not disentangle the two signals onto two disjoint set of bases.
\end{theorem}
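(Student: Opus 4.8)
The plan is to reduce the statement to the elementary fact that two vectors with disjoint supports in an orthogonal basis must be orthogonal, and then to argue by contraposition. First I would fix notation: regard the sampled components $\gX_{\text{Trend}, t}$ and $\gX_{\text{Seasonal}, t}$, $t\in[t_1,t_n]$, as vectors in $\R^n$ via their values $(\gX^1_{\cdot, t},\dots,\gX^n_{\cdot, t})$, and expand each in the orthogonal basis $\mE=\{e_1,\dots,e_n\}$, writing $\gX_{\text{Trend}, t}=\sum_{j=1}^n a_j e_j$ and $\gX_{\text{Seasonal}, t}=\sum_{j=1}^n b_j e_j$. Under this reading, ``$\gX_{\text{Seasonal}, t}$ has a non-zero eigenvalue on $e_j$'' means $b_j\neq 0$, i.e.\ the projection of $\gX_{\text{Seasonal}, t}$ onto $\mathrm{span}(e_j)$ is non-zero, so $\mE_{\text{Trend}}=\{e_j : a_j\neq 0\}$ and $\mE_{\text{Seasonal}}=\{e_j : b_j\neq 0\}$. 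The residual component plays no role and can be discarded.

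Next I would invoke the orthogonality of $\mE$: since $\langle e_j,e_k\rangle=\|e_j\|^2\,\delta_{jk}$, a Parseval-type identity gives
\begin{equation}
\sum_{i=1}^n \gX_{\text{Trend}, t}^i\,\gX_{\text{Seasonal}, t}^i \;=\; \langle \gX_{\text{Trend}, t},\,\gX_{\text{Seasonal}, t}\rangle \;=\; \sum_{j=1}^n a_j\, b_j\, \|e_j\|^2 .
\end{equation}
This is the only place where the basis property of $\mE$ is used; it converts the time-domain non-orthogonality hypothesis into a statement about the expansion coefficients $a_j,b_j$.

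Then I would argue by contradiction. Suppose $\mE_{\text{Trend}}\cap\mE_{\text{Seasonal}}=\emptyset$. Then for each index $j$ at least one of $a_j$, $b_j$ vanishes, hence $a_j b_j=0$ for every $j$, and the displayed identity forces $\sum_{i=1}^n \gX_{\text{Trend}, t}^i\,\gX_{\text{Seasonal}, t}^i=0$, contradicting the hypothesis that the two signals are not orthogonal. Therefore $\mE_{\text{Trend}}\cap\mE_{\text{Seasonal}}\neq\emptyset$, i.e.\ some basis vector carries non-zero weight from both the trend and the seasonal part, so $\mE$ cannot separate the two signals onto disjoint sets of bases.

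The main obstacle here is not inequality-chasing but committing to a precise reading of the somewhat informal statement: one must interpret ``non-zero eigenvalues on a basis element'' as ``non-zero expansion coefficient / non-zero projection onto the span of that element,'' and one should note that orthogonality of $\mE$ — rather than full orthonormality — suffices, because the weights $\|e_j\|^2>0$ are strictly positive and so cannot create cancellation beyond what the coefficients themselves already force. Once those conventions are fixed, the argument is a two-line consequence of the Parseval identity above.
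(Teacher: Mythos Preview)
Your proposal is correct and follows essentially the same route as the paper: expand both signals in the orthogonal basis, identify $\mE_{\text{Trend}}$ and $\mE_{\text{Seasonal}}$ with the supports of the coefficient vectors, use orthogonality to reduce the inner product to $\sum_j a_j b_j\|e_j\|^2$, and conclude that a non-zero inner product forces some index with both coefficients non-zero. Your version is in fact slightly more careful than the paper's, since you make the Parseval step and the role of $\|e_j\|^2>0$ explicit and frame the argument cleanly as a contrapositive.
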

\begin{proof}
    We decompose $\gX_{\text{Seasonal}, t}$ and $\gX_{\text{Trend}, t}$ onto $\mE$ and acquire that $\gX_{\text{Seasonal}, t} = \sum a_i e_i$ and $\gX_{\text{Trend}, t} = \sum b_i e_i$. Then it is obvious that $e_i \in \gX_{\text{Seasonal}} \Leftrightarrow a_i \neq 0$ and $e_i \in \gX_{\text{Trend}} \Leftrightarrow b_i \neq 0$. Now, let us consider the inner product of $\gX_{\text{Seasonal}, t}$ and $\gX_{\text{Trend}, t}$:
    \begin{equation}
    \begin{aligned}
        \sum_{i=1}^n  \gX_{\text{Trend}, t}^i \gX_{\text{Seasonal}, t}^i = \gX_{\text{Trend}, t} \gX_{\text{Seasonal}, t} \\
        = (\sum a_i e_i) \cdot (\sum b_i e_i) = \sum_{i,j} a_i b_j e_i e_j
    \end{aligned}
    \end{equation}
Note that $\sum_{i=1}^n \gX_{\text{Trend}, t}^i \gX_{\text{Seasonal}, t}^i = 0$. Thus, there must be at least one $i$ such that $a_i \neq 0$ and $b_i \neq 0$. Thus. $e_i \in \mE_{\text{Seasonal}}$ and $e_i \in \mE_{\text{Trend}}$, in other words, $\mE_{\text{Trend}} \cap \mE_{\text{Seasonal}} \neq 0$.
The theorem demonstrates that if $\gX_{\text{Trend}, t}$ and $\gX_{\text{Seasonal}, t}$ are not orthogonal, orthogonal bases that separate $\gX_{\text{Trend}, t}$ and $\gX_{\text{Seasonal}, t}$ into two distinct sets cannot exist. Typically, periodic and non-periodic signals are not orthogonal because the periodic signal has a discrete spectrum, while the non-periodic signal has a continuous one, leading to potential overlaps at non-zero frequencies. Principal Component Analysis (PCA) seeks to find orthogonal bases in data, but it cannot split these two signals into separate bases. Citing Theorem 1 from ~\cite{zhou2023one}, we understand that self-attentive mechanisms in pre-trained large models function similarly to PCA. Thus, without manual intervention, the self-attentive mechanism is unable to automatically divide a time series into trend and seasonal components.
\end{proof}

\begin{theorem}[\bf Exchange Low-Rank Matrices Ensures Privacy]
Given a on-device weather modeling framework based on federated learning that gloabl optimization object is $\mF(\theta) = \sum^{i=1}_n p_i f(\lbrace D_i\rbrace;\theta)$, where $f(x; \theta)$ is the loss function of $i$-th client, $\lbrace D_i\rbrace$ is dataset of $i$-th client, and $p_i$ and $\theta$ denote the data distribution weight of client $i$ and the model parameters, respectively. Given that the parameters $\theta$ of the PLM $\gM_\theta$ broadcasted by the server consist of two parts: a frozen part $\gM_{\theta,f}$ and a trainable part $\gM_{\theta,t}$, interacting only the low-rank matrix parameters $\gM_{\theta,l} \subset \gM_{\theta,t}$ is a subset of trainable part $\gM_{\theta,t}$ during each round ensures privacy.
\end{theorem}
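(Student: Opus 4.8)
The statement of privacy here is informal, so the first task is to make it precise and then prove the sharpest clean version of it. I would fix the threat model as an honest-but-curious server that broadcasts $\theta=\gM_{\theta,f}\cup\gM_{\theta,t}$ — where the frozen part $\gM_{\theta,f}$ (in particular the pre-trained projections $W_q,W_v$) is public and never re-transmitted — and whose only incoming message from client $i$ in round $t$ is the low-rank block $\gM_{\theta,l}^{(t)}=\{A_q,B_q,A_v,B_v\}$. ``Ensures privacy'' is then read as: the map $\Phi_i$ sending the local dataset $D_i$ to the full transcript $(\gM_{\theta,l}^{(t)})_t$ is non-injective with high-dimensional fibers, so no algorithm can uniquely recover $D_i$, and the leakage is information-theoretically bounded by the (small) size of the transmitted object.

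\textbf{Step 1: structure of the local update.} Because $A_q,A_v$ are data-independent Gaussians and $B_q,B_v$ start at zero, each round's message is a deterministic function of the broadcast state and $D_i$ obtained by $K$ steps of local SGD,
\begin{equation}
\gM_{\theta,l}^{(t,k+1)}=\gM_{\theta,l}^{(t,k)}-\alpha\,\nabla_{\gM_{\theta,l}}\tfrac{1}{|B|}\textstyle\sum_{x\in B}f(x;\theta^{(t,k)}).
\end{equation}
I would then expand $\nabla_{\gM_{\theta,l}}f$ by the chain rule through the LoRA branch of $\textsc{Query}=W_q\gX+A_qB_q$ and $\textsc{Value}=W_v\gX+A_vB_v$: for each attention layer one gets $\nabla_{A_q}f=G_qB_q^{\!\top}$ and $\nabla_{B_q}f=A_q^{\!\top}G_q$ (and analogously for the value branch), where $G_q$ is an outer-product of the incoming activation with the back-propagated signal at that projection, summed over token positions. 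The key reduction is that every per-sample contribution to the shared parameters has rank at most $r$, and after batch-averaging the round message lives in a subspace of dimension $O(\text{layers}\cdot d\cdot r)$ with $r\ll d$: all data dependence is funnelled through an $O(r)$-dimensional bottleneck per layer and then through a sum over the minibatch.

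\textbf{Step 2: ill-posedness and an information bound.} The total number of real parameters ever seen by the server is $R:=\sum_t|\gM_{\theta,l}^{(t)}|=O(T\cdot\text{layers}\cdot d\cdot r)$, whereas $D_i\in\R^{n_i\times L\times C}$ carries $n_iLC$ degrees of freedom; in the regime of interest ($r\ll d$, short local training) $n_iLC\gg R$. Applying the constant-rank / implicit function theorem wherever $\Phi_i$ is differentiable then gives fibers of dimension at least $n_iLC-R>0$: for every observed transcript there is a positive-dimensional manifold of datasets producing it, so exact reconstruction is impossible. I would sharpen this to an approximate statement by a data-processing inequality, $I(D_i;\text{transcript})\le H\big((\gM_{\theta,l}^{(t)})_t\big)$, which under a finite-precision assumption is $O(R)$ and hence small relative to the entropy of a weather sequence dataset, so no estimator can achieve small reconstruction error uniformly over the fiber.

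\textbf{Main obstacle.} The difficulty is that ``ensures privacy'' is not differential privacy: non-injectivity plus a mutual-information bound rule out unique and worst-case-accurate reconstruction but do not, on their own, preclude leaking coarse aggregate statistics of $D_i$, and quantifying the entropy bound requires either modelling $\gM_{\theta,l}$ as bounded-precision or adding a noise mechanism. My plan is therefore to prove the ill-posedness statement in full, state the information bound under the finite-precision assumption, and — consistent with the privacy discussion in the cited federated-learning literature — note that a DP-style guarantee would require an extra randomization step on top of the low-rank transmission. I would also treat the degenerate case $R\ge n_iLC$ (very long training on a tiny client), where the fiber-dimension argument collapses and only the information-theoretic bound remains.
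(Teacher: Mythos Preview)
Your proposal is correct in spirit and substantially more rigorous than what the paper does, but it takes a genuinely different route.

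The paper's own proof is extremely brief and informal. It first assumes convexity of $f(x;\theta)$ (an assumption that is stated but never actually used in the argument), then simply partitions $\theta=[\theta_l,\theta_o]$ into the exchanged low-rank block and the locally-retained remainder, and asserts that because $\theta_o$ ``contains parameters that reveal user-specific information'' and is never transmitted, privacy is ensured. It closes by noting that the LoRA matrices are initialized as random Gaussian and zero respectively, which ``helps to enhance privacy.'' That is the entire argument: privacy is claimed from the \emph{non-transmission} of the personalized adapter $\theta_o$, not from any analysis of what the transmitted $\theta_l$ can leak.

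Your approach is the dual one: rather than arguing from what is withheld, you analyze what is actually sent. You fix an honest-but-curious threat model, expand the LoRA gradients to exhibit the rank-$r$ bottleneck, do a dimension count to show the data-to-transcript map has positive-dimensional fibers when $n_iLC\gg O(T\cdot\text{layers}\cdot d\cdot r)$, and bound the mutual information via data processing. This is a much stronger and more honest treatment --- in particular, you correctly flag that this is not differential privacy and that coarse statistics could still leak, caveats the paper does not mention. The trade-off is that your argument depends on the regime $r\ll d$ and short training (as you note in the degenerate case), whereas the paper's argument, such as it is, is regime-independent because it rests only on the partition of parameters. If your goal is to match the paper, you would simply observe that the personalized adapter parameters are kept local and never aggregated; if your goal is to prove something with actual content, your plan is the right one.
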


\begin{proof}
We assume that $f(x; \theta)$ is a convex function with respect to $\theta$, i.e., for any $\theta_1$ and $\theta_2$ and $\lambda \in [0, 1]$, we have
\begin{equation}
f(x; \lambda \theta_1 + (1 - \lambda) \theta_2) \leq \lambda f(x; \theta_1) + (1 - \lambda) f(x; \theta_2).
\end{equation}
Since only low-rank matrices parameter $\gM_{\theta,l}$ parameterized by $\theta_l$ is exchanged, we can convert $\theta$ to $\theta' = [\theta_l', \theta_o]$, where $\theta_l'$ is the embedding parameter after the server update. Since we only update on $\theta_l$, $\theta_o$ remains unchanged. Thus, data privacy can be ensured, as $\theta_o$ contains parameters that reveal user-specific information. Furthermore, the low-rank matrices applied to the PLM $\gM_\theta$ using LoRA are initialized with a random Gaussian distribution and all-zero values, respectively, before training. This global information sharing approach also helps to enhance privacy.
\end{proof}

\section{Additional Finding \& Experiment \& Discussion}
\label{appendix:finding}
In this section, we explore and discuss potential research findings and questions for our \textsc{LM-Weather} via conducting additional experiments. These potential research questions are as follows:
\begin{itemize}
    \item \textbf{RQ1.} How does \textsc{LM-Weather} compare to Personalized Federated Learning (PFL) baselines in terms of trade-offs in personalization and global model performance?
    \item \textbf{RQ2.} How does \textsc{LM-Weather} compare to other FL baselines focused on improving communication efficiency?
    \item \textbf{RQ3.} How does \textsc{LM-Weather} perform compared to centralized and local-only training modes?
    \item \textbf{RQ4.} How does the pre-trained language model contribute in \textsc{LM-Weather}?
    \item \textbf{RQ5.} How robust in \textsc{LM-Weather} to the number of participating training devices?
    \item \textbf{RQ6.} What is the resource utilization and training \& inference cost of \textsc{LM-Weather}?
    \item \textbf{RQ7.} Can \textsc{LM-Weather} be used for other tasks?
\end{itemize}

\subsection{Trade-offs in Personalization and Global Model Performance (RQ1)}
Our \textsc{LM-Weather} builds on the assumption that the foundation model already exists, treating pre-trained language models (PLMs) as such and broadcasting it to each client to achieve local updates. Our aim is to employ device information-specific (e.g., geographic/atmospheric patterns) adapter, to promote the local PLM in achieving cross-domain knowledge transfer from language to meteorological sequences. This approach yields highly customized models for individual devices while achieving global knowledge to avoid data silo, thereby supporting diverse analyses of heterogeneous weather data. Alternative PFL methods do not match the efficiency and flexibility of our personalized adapter in this context, making them less suitable. By incorporating PFL baselines (Per-FedAvg~\cite{fallah2020personalized}, APPLE~\cite{luo2022adapt}, FedPer~\cite{arivazhagan2019federated}, and FedALA~\cite{zhang2023fedala}), we provide quantitative results that substantiate our claims, experiment setting is consistent with the manuscript on ODW1T.

\paragraph{PFL Baseline.} A brief description of PFL baselines used in this section of the experiment is as follows.
\begin{itemize}
    \item \textbf{Per-FedAvg}: Allowing for personalized model updates for each client by adding client-specific parameters to the global model and optimizing them during FL training.
    \item \textbf{APPLE}: Tackling statistical heterogeneity in FL by automatically capturing information required by clients from global models using adaptive local aggregation methods.
    \item \textbf{FedPer}: Dividing the model into a base layer and a personalization layer, only the base layer is uploaded during aggregation while keeping the personalization layer to combat statistical heterogeneity.
    \item \textbf{FedALA}: Tackling statistical heterogeneity in FL by automatically capturing information required by clients from global models using adaptive local aggregation methods.
\end{itemize}
\input{Table/personalized_compare}
\paragraph{Personalized Performance Comparison.} The performance quantification of our \textsc{LM-Weather} and PFL baselines under personalized performance for different tasks and scenarios is shown in Table~\ref{tab:personalized_compare}. Our \textsc{LM-Weather} outperform other PFL baselines across different tasks (forecasting/imputation) and scenarios (regular/few-shot learning) by a wide margin. This supports our finding that in the scenario of on-device weather variable modeling, PFL methods is not appropriate.

\paragraph{Global Model Performance Comparison.} The comparison on global model performance across client between our \textsc{LM-Weather} and PFL baselines are shown in Table~\ref{tab:personalized_compare}. Our \textsc{LM-Weather} outperforms PFL baselines in terms of global model performance, as demonstrated by the fact that its global model performs more stable across client with heterogeneous data.
\input{Table/global_perform_fore}

\paragraph{Personalization and Global Model Performance Trade-offs.} We consider the trade-off between personalization performance and global model performance for our \textsc{LM-Weather} and PFL baselines, the results are shown in Table~\ref{tab:tradeoffs}. Compared with PFL baselines, our \textsc{LM-Weather} maintains the best trade-off between personalization performance and global model performance, \textit{i.e.}, the personalization performance does not significantly exceed the global model performance while the performance far exceeds PFL methods, which means that the \textsc{LM-Weather} can be flexibly applied to different practical scenarios, including personalised analysis of regional weather trends as well as comprehensive analysis of weather trends over large-scale regions. This means that \textsc{LM-Weather} can be flexibly applied to different practice scenarios, including the personalised analysis of regional weather trends and the comprehensive analysis of weather trends over large scale areas.
\begin{table}[tbh]
  \centering
  \caption{Comparison of LM-Weather between personalized performance and global model performance, results are obtained on the multivariate-multivariate forecasting task on OWD1T (MAE/RMSE report), \textbf{Bold} means the best, $\uparrow$ represents the improvement (gap) in the personalization performance of the method relative to the global model performance.}
  \resizebox{\textwidth}{!}{
    \begin{tabular}{cccc}
    \toprule
    Method & Personalized Performance & Global Model Performance & Ave. Variation (Personalized vs. Global) \\
    \midrule
    Per-FedAvg & 48.6/76.7 & 57.0/84.9 & $\uparrow$ 13.99\% \\
    APPLE & 51.7/79.0 & 59.9/83.2 & $\uparrow$ 10.58\% \\
    FedALA & 50.4/80.0 & 58.4/82.8 & $\uparrow$ 9.68\% \\
    FedPer & 52.1/79.0 & 61.4/92.7 & $\uparrow$ 17.59\% \\
    \textsc{LM-Weather} (Ours) & \bf 45.4/74.6 & \bf 51.2/78.7 & \bf $\uparrow$ 9.16\% \\
    \bottomrule
    \end{tabular}}
  \label{tab:tradeoffs}
\end{table}%
\paragraph{Performance and Adapter Updating Trade-offs.} Furthermore, we investigated the effect of varying the number of local update rounds in adapters across clients on the performance of \textsc{LM-Weather} regrading personalization. The results are presented in \textbf{Table~\ref{tab:adapter_update}}. We observed that increasing the local update rounds from the default five to fifteen leads to smoother and enhanced personalization performance across heterogeneous clients. However, this increase in local update rounds also incurs additional computational and communication costs, which, in our assessment, do not justify the modest performance improvements.
\input{Table/tradeoff_update}

\subsection{Communication Efficiency (RQ2)}
We have shown the communication efficiency of our \textsc{LM-Weather} in \textbf{Table~\ref{tab:param_com}}, all baseline achieve a 100\% rank as they exchange all local parameters during communication. However, we acknowledge that solely comparing the communication parameter ratio doesn't provide a holistic view of \textsc{LM-Weather}'s communication effectiveness, as some baselines inherently possess fewer parameters. To address this and further validate the excellent communication efficiency of our \textsc{LM-Weather}, we introduce quantitative comparisons by including FL methods tailored to improve communication efficiency (FedKD~\cite{wu2022communication}, FedPAQ~\cite{reisizadeh2020fedpaq}, FedBF~\cite{zhang2023fedpetuning}, FedAP~\cite{zhang2023fedpetuning}, PromptFL~\cite{guo2023promptfl}) as baselines.

\paragraph{Baseline.} A brief description of these FL methods tailored to improve communication efficiency is as follows.
\begin{itemize}
    \item \textbf{FedKD}: This parameter-efficient PFL method integrates knowledge distillation within a single client and employs a parameter aggregation strategy using Singular Value Decomposition (SVD). For the purposes of this section, which focuses solely on comparing communication efficiency, we incorporate only the SVD-based client-server communication strategies into \textsc{LM-Weather} as a baseline.
    \item \textbf{FedPer}: This PFL approach maintains a personalized layer while sharing the remaining base layers during communication. This enhances communication efficiency by transmitting only a portion of the parameters.
    \item \textbf{FedBF}: This fine-tuning method enhances parameter efficiency by sharing only the biases of the local model during global aggregation, thereby reducing communication overhead. To integrate this method into \textsc{LM-Weather} as a baseline, we adjusted all biases in \textsc{LM-Weather} to be unfrozen.
    \item \textbf{FedAP}: A parameter-efficient fine-tuning method in FL, which involves sharing only adapters during global aggregation.
    \item \textbf{PromptFL}: This parameter-efficient FL method enables participants to cooperatively train lightweight prompts without sharing the entire model, significantly accelerating both local training and global aggregation. In our experiments, we treat the adapter generated on clients as the prompt to facilitate the incorporation of this baseline.
\end{itemize}

\paragraph{Quantitation Results and Comparison.} The results is shown in \textbf{Table~\ref{tab:communication_eff}}, compared to \textsc{LM-Weather}-Ave (serve as the standard line), our \textsc{LM-Weather} achieves a significant improvement in communication efficiency while maintaining excellent performance. Additionally, \textsc{LM-Weather} significantly outperforms baseline in terms of both communication efficiency and performance across different tasks. Even when compared to lightweight baselines (i.e., FL-LightTS/DLinear), \textsc{LM-Weather} continues to outperform them. This underscores \textsc{LM-Weather}'s superiority in both communication efficiency and performance.
\input{Table/communication_eff}

\subsection{Centralised and Local-only Training (RQ3)}
The ordinary centralised training strategy (all data were aggregated into a single server) exhibits learning efficiency that an ordinary distributed learning strategy. The ultimate goal of FL is to achieve performance close to that of centralised training and to ensure privacy across data sources. \textbf{Table~\ref{tab:centra_local}} illustrates that our \textsc{LM-Weather} achieves comparable effectiveness to Non-FL (centralised) training, with only a 2.04\% disparity. Compared to \textsc{LM-Weather}-Local, which lacks interaction between devices, \textsc{LM-Weather} performs better due to overcoming data silos. 
\input{Table/centralized_local}

\subsection{Contributions of Pre-trained Language Model in \textsc{LM-Weather} (RQ4)}
\input{Table/contribu_llm}

Our \textsc{LM-Weather} significantly outperforms time series-specific models trained from scratch under centralised setup. Centralised training aims to acquire an excellent pre-trained model, where PLMs possess inherent advantages due to their prior sequence modeling capabilities. Moreover, various parameter-efficient fine-tuning (PEFT) strategies enable PLMs to adapt to new domain knowledge cost-effectively. FL-based aggregation facilitates a stable on-device fine-tuning process, with LM-Weather enabling highly customized on-device fine-tuning of PLMs with greater efficiency. This highlights the substantial contribution of PLMs in this task.

\subsection{Robustness to Number of Devices (RQ5)}
During our main experiments, we maintained a device participation rate of 0.1, meaning that 2 devices (for ODW1) and 4 devices (for ODW2) were randomly selected for updating in each communication round. We incorporated quantitative analysis results on ODW1T and ODW1V dataset showcasing the impact of varying the number of devices across different tasks (\textbf{Table~\ref{tab:robust_devices}}). Additionally, we provided the percentage change in performance relative to the default number of devices to assess \textsc{LM-Weather}'s robustness to device count fluctuations. Our analysis reveals that LM-Weather exhibits robustness to changes in the number of devices, stemming from several factors: \textbf{(1)} Increasing the number of devices during training, both in regular and few-shot scenarios, marginally improves LM-Weather's performance in most cases, with fluctuations confined within a stable range. \textbf{(2)} Performance gains from additional devices are not strictly proportional, as introducing more devices can also lead to decreased performance due to data distribution imbalances, resulting in overall performance impairment. \textbf{(3)} The addition of more devices entails greater communication burdens, which may not be cost-effective for minor performance gains, particularly in resource-constrained environments. These findings underscore \textsc{LM-Weather}'s relative resilience to device count variations and its ability to strike an optimal balance between performance enhancement and communication overhead.
\input{Table/robut_devices}

\subsection{No Free Lunch in Performance Improvement (RQ6)}
The remarkable capabilities of cutting-edge DL models across various domains and tasks, such as LLMs, and VLMs, can be attributed to their extensive parameters and training on large datasets. Currently, a perfect balance among performance, model size, and cost does not exist. Despite numerous studies focusing on reducing training and inference costs while maintaining superior performance, there is no one-size-fits-all solution. This is also true for our \textsc{LM-Weather}, which demonstrates exceptional performance across diverse tasks and scales on real-world datasets with significant heterogeneity, significantly outperforming comparable DL methods. In this context, we analyze the costs associated with training and inference for \textsc{LM-Weather} and its baselines, exploring and discussing the trade-offs between cost-effectiveness and performance in practical applications.

\input{Table/compute_cost}

The quantification and comparison of computational costs against \textsc{LM-Weather} and baseline are shown in \textbf{Table~\ref{tab:compu_cost}}. We discuss this results from two perspectives as follow.

\paragraph{Communication and Performance.} \textsc{LM-Weather} outperforms baseline in these two key metrics, which is critical for practical meteorological variable modeling and analysis, a bandwidth-sensitive and high accuracy demanding application.
\paragraph{Trade-offs between Resource Consumption and Performance.} In terms of training and inference time and memory usage, \textsc{LM-Weather} is less efficient than lightweight baselines such as FL-DLinear, LightTS, and iTransformer, due to its use of a Pretrained Language Model (PLM) as a backbone. Although \textsc{LM-Weather} demands more resources, the trade-off is justified by its cost-effective performance gains. Its slightly increased memory requirements for training and inference are manageable on most devices. In the context of weather analysis, where precision is critical, prioritizing performance improvements over minimal resource consumption is essential. Additionally, \textsc{LM-Weather} capitalizes on the knowledge-rich PLM and requires only minimal, low-cost fine-tuning on devices to achieve superior performance. This strategy not only enhances performance but also reduces the frequency of future model updates, thereby lowering long-term costs compared to developing a baseline model from scratch.
\input{Table/model_size}

In addition, we further provide a comparison of model sizes and performance between \textsc{LM-Weather} and baseline, as shown in \textbf{Table~\ref{tab:model_size}}. The difference in model size between our \textsc{LM-Weather} and baseline can be deemed acceptable for the following reasons.
\paragraph{Trade-offs between Performance and Size.} While \textsc{LM-Weather} may not be as compact in terms of model size or resource efficiency as lightweight baselines, it offers significant advantages in various analysis tasks. Its high performance is particularly valuable in practical applications. Moreover, with a model size of 304.19 M, \textsc{LM-Weather} is still accessible for devices with limited resources. This contrasts sharply with many large foundation models, which typically comprise several hundred million parameters. The trade-off between performance and size is justified, especially considering the critical nature of accurate weather data analysis.
\paragraph{Efficient Parameter Update and Communication.} \textsc{LM-Weather} implements efficient on-device fine-tuning of the pretrained language model. Unlike baselines that require training from scratch, \textsc{LM-Weather} only needs fine-tuning of a relatively small number of parameters (10.38 M) on each device, with minimal device-to-server communication overhead (0.38 M). This approach facilitates highly personalized cross-domain knowledge transfer, significantly reducing the ongoing costs associated with processing the ever-changing streams of weather data. These aspects highlight the pragmatic considerations that have shaped the design of \textsc{LM-Weather}. The model's capabilities to deliver exceptional performance, combined with its efficient parameter tuning and communication strategies, offer a cost-effective solution for advanced weather data analysis in resource-constrained environments.

\subsection{Additional Tasks for Potential Applications (RQ7)}
Given datasets we proposed in this paper focus on forecasting and imputation tasks, we broaden its scope briefly to explore its potential application by integrating anomaly weather detection tasks. This involves relabeling the dataset to identify intervals with anomalous meteorological variables as instances of abnormal weather processes. Specifically, we label original datasets via Isolation Forecast~\cite{cao2024anomaly}, the main process as follows: (1) We set the cut length to 100, using this metric to segment each channel (variable) and construct several random trees that collectively form a forest. (2) The ``isolation`` degree of each data point is quantified by the average path length from the root node to the terminal node.  (3) Data points with shorter path lengths are more easily isolated and thus more likely to be outliers. (4) We establish a threshold based on the average path length; data points falling below this threshold are classified as anomalies.
\paragraph{Evaluation Metrics.} We used Precision (P), Recall (R), and F1-Score (F1) to simply quantify the performance of \textsc{LM-Weather} and baselines on the weather anomaly detection task, these can be formulated as:
\begin{equation}
    \text{P} = \frac{\text{TP}}{\text{TP} + \text{FP}},  \qquad  \qquad\text{R} = \frac{\text{TP}}{\text{TP} + \text{FN}}, \qquad  \qquad\text{F1}= 2\times \frac{\text{P} \times \text{R}}{\text{P} + \text{R}},
\end{equation}
where TP (True Positives), FP (False Positives), and FN (False Negatives) represent the number of samples correctly labeled as anomalous, the number of samples incorrectly labeled as anomalous, and the number of samples that were not labeled as anomalous by the model but were actually anomalous, respectively.
\paragraph{Experiments and Results.} We set the input time series length to 100, and other settings (e.g., baselines, hyper-parameters, local updating steps and federated communication rounds, etc.) are consistent with those in the main text, and we conduct experiments on OWD1T and OWD2T to briefly show the results. The performance quantification of our proposed \textsc{LM-Weather} and baseline on weather anomaly detection tasks is shown in \textbf{Table~\ref{tab:anomy_odw1t}} (ODW1T results) and \textbf{Table~\ref{tab:anomy_odw2t}} (ODW2T results).The findings underscore \textsc{LM-Weather}'s robust applicability and its Moreover, \textsc{LM-Weather}'s superior performance over baselines in both regular and few-shot tasks reaffirms its effectiveness and overall superiority. Moreover, \textsc{LM-Weather}'s superior performance over baselines in both normal and few-shot tasks reaffirms its effectiveness and overall superiority. 
\input{Table/anomy_odw1t}
\input{Table/anomy_odw2t}

\section{Full Experiment Results}
\label{sec:full_res}
In this section, we provide the full experimental results not included in the main manuscript. This includes the main experiments (\textbf{Appendix~\ref{subsec:fullmain_res}}), few-shot learning experiments (\textbf{Appendix~\ref{subsec:full_fewshot}}), and ablation studies (\textbf{Appendix~\ref{subsec:full_ablation}}), as well as extra analysis of our framework, covering hyperparameter sensitivity (\textbf{Appendix~\ref{subsec:hyperparam}}) and its performance with different PLMs.
\subsection{Full Main Results}
\label{subsec:fullmain_res}
In this section, we show detailed and full experimental results including:
\begin{itemize}
    \item Forecasting (\textbf{Tab.~\ref{tab:weather_time_fore}}) and imputation (\textbf{Tab.~\ref{tab:weather-time-imp}}) across different scenes and settings on the \textbf{ODW1T} dataset.
    \item Forecasting (\textbf{Tab.~\ref{tab:weather_var_fore}}) and imputation (\textbf{Tab.~\ref{tab:weather-var-imp}}) across different scenes and settings on the \textbf{ODW1V} dataset.
    \item Forecasting (\textbf{Tab.~\ref{tab:uscaircn-time-fore}}) and imputation (\textbf{Tab.~\ref{tab:uscaircn-time-imp}}) across different scenes and settings on the \textbf{ODW2T} dataset.
    \item Forecasting (\textbf{Tab.~\ref{tab:uscaircn-var-fore}}) and imputation (\textbf{Tab.~\ref{tab:uscaircn-var-imp}}) across different scenarios and settings on the \textbf{ODW2V} dataset.
\end{itemize}
Note that we only show the comparison between the proposed \textsc{LM-Weather} and the time series-specific baseline in the full experimental results. Our \textsc{LM-Weather} outperforms specialized time-series analysis models on on-device weather datasets across various environments. Unlike these models, our method doesn't require training from scratch but only minor adjustments to a small number of parameters. This validates the effectiveness and superiority of our proposed framework in on-device weather modeling practice.
\input{Table/weather_time_fore}
\input{Table/weather_time_imp}
\input{Table/weather_var_fore}
\input{Table/weather_var_imp}
\input{Table/uscaircn_time_fore}
\input{Table/uscaircn_time_imp}
\input{Table/uscaircn_var_fore}
\input{Table/uscaircn_var_imp}

\subsection{Full Few-Shot Learning Experiments}
\label{subsec:full_fewshot}
In this section, we show detailed and full few-shot learning experimental results including:
\begin{itemize}
    \item Forecasting (\textbf{Table.~\ref{tab:weather-time-fore5-few}} for 5\% training data, \textbf{Table.~\ref{tab:weather-time-fore15-few}} for 15\% training data) and imputation (\textbf{Table.~\ref{tab:weather-time-imp5-few}} for 5\% training data, \textbf{Table.~\ref{tab:weather_time_imputation_15}} for 15\% training data) across different scenes and settings on the \textbf{ODW1T} dataset.
    \item Forecasting (\textbf{Table.~\ref{tab:weather-var-fore5-few}} for 5\% training data, \textbf{Table.~\ref{tab:weather-var-fore15-few}} for 15\% training data) and imputation (\textbf{Table.~\ref{tab:weather-var-imp5-few}} for 5\% training data, \textbf{Table.~\ref{tab:weather-var-imp15-few}} for 15\% training data) across different scenes and settings on the \textbf{ODW1V} dataset.
    \item Forecasting (\textbf{Table.~\ref{tab:uscaircn-time-fore5-few}} for 5\% training data, \textbf{Table.~\ref{tab:uscaircn-time-fore15-few}} for 15\% training data) and imputation (\textbf{Table.~\ref{tab:uscaircn-time-imp5-few}} for 5\% training data, \textbf{Table.~\ref{tab:uscaircn-time-imp15-few}} for 15\% training data) across different scenes and settings on the \textbf{ODW2T} dataset.
    \item Forecasting (\textbf{Table.~\ref{tab:uscaircn-var-fore5-few}} for 5\% training data, \textbf{Table.~\ref{tab:uscaircn-var-fore15-few}} for 15\% training data) and imputation (\textbf{Table.~\ref{tab:uscaircn-var-imp5-few}} for 5\% training data, \textbf{Table.~\ref{tab:uscaircn-var-imp15-few}} for 15\% training data) across different scenarios and settings on the \textbf{ODW2V} dataset.
\end{itemize}

\input{Table/weather_time_fore5_few}
Experimental results indicate that our \textsc{LM-Weather} significantly outperforms the baseline in resource-constrained situations, such as few-shot learning environments with limited training data. This suggests that \textsc{LM-Weather} effectively leverages PLMs for sequential data modeling and achieves commendable performance without requiring extensive data for training.
\input{Table/weather_time_fore15_few}
\input{Table/weather_time_imp5_few}
\input{Table/weather_time_imp15_few}

\input{Table/weather_var_fore5_few}
\input{Table/weather_var_fore15_few}
\input{Table/weather_var_imp5_few}
\input{Table/weather_var_imp15_few}

\input{Table/uscaircn_time_fore5_few}
\input{Table/uscaircn_time_fore15_few}
\input{Table/uscaircn_time_imp5_few}
\input{Table/uscaircn_time_imp15_few}

\input{Table/uscaircn_var_fore5_few}
\input{Table/uscaircn_var_fore15_few}
\input{Table/uscaircn_var_imp5_few}
\input{Table/uscaircn_var_imp15_few}

\clearpage
\subsection{Full Ablation Experiments}
\label{subsec:full_ablation}
In this subsection, we show the results of the complete ablation experiment, both in the forecasting (\textbf{Table.~\ref{tab:full-ablation-fore}}) and in imputation (\textbf{Table.~\ref{tab:full-ablation-imp}}).
\input{Table/ablation_fore}
\input{Table/ablation_imp}

\subsection{Hyper-parameter Sensitivity}
\label{subsec:hyperparam}
\input{Table/incomplete_table7}
The impacts of rank on performance are detailed in \textbf{Table.~\ref{tab:param_impact}}. As the rank goes up, there's a consistent improvement, reaching its best at $r=8$. However, when $r=12$, there's a drop in performance. This happens because a higher rank means the local model has more trainable parameters, which can improve performance empirically. While a higher rank can cause increased communication cost and introduce more uncertainty.

\subsection{Pre-trained Language Model Variants}
\input{Table/llm_vars}
We compare three representative PLM backbones with varying capacities, the result is shown in \textbf{Table.~\ref{tab:llm_vars}}. Under the proposed \textsc{LM-Weather} framework, it's evident that various PLM backbones maintain strong sequence modeling capabilities. Moreover, the lightweight personalized adapter in \textsc{LM-Weather} enhance the PLM's ability to transfer knowledge from natural language sequences to complex weather sequences. This further validates the superiority and versatility of our \textsc{LM-Weather}.

\section{Additional Statements}
\label{appendix:state}
\subsection{Impact Statements}
We highlight that the goal of this study to proposed \textsc{LM-Weather} is not to compete but instead to complement current on-device meteorological variable modeling framework. Today's climate foundation models are typically trained from scratch, utilizing exceptionally large datasets (nearly 100TB) and incurring substantial computational costs. We hope that \textsc{LM-Weather} offers a cost-effective alternative for modeling meteorological variables on-device, thereby enabling accurate regional weather trend analysis. In addition, the dataset we complied can be the important resource to provide exploring chances for this field, facilitating future research.

This research seeks to make on-device meteorological variable modeling more efficient and adaptable. By using a PLM as a foundation model instead of training large foundation models from scratch, it eliminates the need for large-scale real weather data and extensive computational resources. Additionally, it supports a variety of devices, enabling everything from advanced smartphones to basic IoT sensors to perform meteorological variable modeling. The method is also designed to be stable in environments with limited data and those outside of typical distribution ranges, providing credible analytical support for further weather trend analyses.

\subsection{Limitations}
Although our \textsc{LM-Weather} significantly outperforms models trained from scratch for time series analysis across various tasks and scenarios with minimal parameter tweaks, it still faces two primary limitations:
\begin{itemize}
    \item \textbf{Limited Dataset Scale}: Due to constraints on computational resources and operational costs, we evaluated the performance of \textsc{LM-Weather} using the real-world datasets that did not approach the scale of tens of terabytes often required for training large-scale meteorological models. This limitation does not affect \textsc{LM-Weather} to be extended as a general framework for regional weather trend analysis. This framework supports the analysis of on-device meteorological variables and can be further developed and adapted for additional applications.
    \item \textbf{Dependence on PLMs' Quality and Performance}: Although \textsc{LM-Weather} leverages PLMs to achieve high efficiency and customization on heterogeneous devices, this dependency means that the quality and the performance of \textsc{LM-Weather} are intrinsically tied to the underlying PLMs. Should there be inherent limitations or biases within the PLMs, these could translate to the meteorological modeling performance. Conversely, if conditions allow the use of a more powerful LLM, \textsc{LM-Weather}'s performance can be significantly improved. This might give the community more opportunities to explore the future road-map.
\end{itemize}
\subsection{Future Works}
In future work, we aim to broaden the use of \textsc{LM-Weather} across more on-device variable modeling applications. We also plan to incorporate additional types of data, including satellite and radar imagery, as well as textual weather descriptions, to advance towards a more generalized approach to on-device meteorological variable modeling.
\end{appendices}


\newpage

\end{document}